\providecommand{\U}[1]{\protect\rule{.1in}{.1in}}
\newtheorem{theorem}{Theorem}
\newtheorem{acknowledgement}[theorem]{Acknowledgement}
\newtheorem{definition}[theorem]{Definition}
\newtheorem{example}{Example}
\newtheorem{lemma}{Lemma}
\newenvironment{proof}[1][Proof]{\noindent\textbf{#1.} }{\ \rule{0.5em}{0.5em}}
\numberwithin{equation}{section}
\begin{document}

\title{Systematic construction of non-autonomous Hamiltonian equations of
Painlev\'{e}-type. III. Quantization }
\author{Maciej B\l aszak\\Faculty of Physics, Department of of Mathematical Physics and Computer Modelling,\\A. Mickiewicz University, 61-614 Pozna\'{n}, Poland\\\texttt{blaszakm@amu.edu.pl}
\and Krzysztof Marciniak\\Department of Science and Technology \\Campus Norrk\"{o}ping, Link\"{o}ping University\\601-74 Norrk\"{o}ping, Sweden\\\texttt{krzma@itn.liu.se}}
\maketitle

\begin{abstract}
This is the third article in our series of articles exploring connections
between dynamical systems of St\"{a}ckel-type and of Painlev\'{e}-type. In
this article we present a method of deforming of minimally quantized
quasi-St\"{a}ckel Hamiltonians, considered in Part I to self-adjoint operators
satisfying the quantum Frobenius condition, thus guaranteeing that the
corresponding Schr\"{o}dinger equations posses common, multi-time solutions.
As in the classical case, we obtain here both magnetic and non-magnetic
families of systems. We also show the existence of multitime-dependent quantum
canonical maps between both classes of quantum systems.

\end{abstract}

\section{Introduction}

This is the third article in the suit of articles investigating relations
between Painlev\'{e}-type systems and St\"{a}ckel-type systems. In the first
paper (Part I, i.e. \cite{part1}, see also \cite{arxiv}) we have constructed,
starting from appropriate St\"{a}ckel-type systems, multi-parameter families
of Frobenius integrable non-autonomous Hamiltonian systems with arbitrary
number of degrees of freedom. In the second article (Part II, i.e.
\cite{part2}) we have constructed the isomonodromic Lax representations for
these systems, proving that Frobenius integrable systems constructed in Part I
are indeed Painlev\'{e}-type systems. Each of such families was written in two
different representations, called an ordinary one and a magnetic one,
respectively, and they were connected by a multi-time canonical transformation
\cite{Iwasaki}. In this paper we discuss the minimal quantization
\cite{blasz2016},\cite{Book} of Painlev\'{e}-type systems considered in Part I
and Part II. We prove that this quantization turns the Hamiltonians $H_{r}$ of
the Painlev\'{e}-type systems into their quantum counterparts, that is the
self-adjoint operators $\widehat{H}_{r}$, acting in an appropriate Hilbert
space $\mathcal{H}$ and satisfying \emph{the quantum Frobenius condition}%
\begin{equation}
i\hslash\frac{\partial\widehat{H}_{r}}{\partial t_{s}}-i\hslash\frac
{\partial\widehat{H}_{s}}{\partial t_{r}}+\left[  \widehat{H}_{r}%
,\widehat{H}_{s}\right]  =0,\quad r,s=1,\ldots,n\text{. } \label{jeden}%
\end{equation}
We also prove that both types of quantum Painlev\'{e}-type systems are related
by a quantum canonical transformation which resembles the classical result.

The paper is organized as follows. Section \ref{s2} is devoted to a concise
presentation (with references to literature) of both classical and quantum
St\"{a}ckel systems, both with ordinary and with magnetic potentials, as well
as canonical transformations between them. In Section \ref{sec 2} we remind
the construction of classical Painlev\'{e}-type systems from appropriate
deformations of St\"{a}ckel-type systems. Section \ref{s4} contains the first
of two main results of this paper, namely the systematic method of minimal
quantization of all the classical systems from Section \ref{sec 2} in such a
way that they satisfy the quantum Frobenius condition (\ref{jeden}). In
Section \ref{s5} we present the second result of this article that is the
multi-time quantum canonical transformations between quantum Painlev\'{e}-type
systems with magnetic potentials and the corresponding quantum
Painlev\'{e}-type systems with ordinary potentials.

\section{Classical and quantum St\"{a}ckel systems with ordinary and magnetic
potentials\label{s2}}

Consider the following algebraic curve in the $(x,y)$-plane
\begin{equation}
\sum\limits_{\alpha\in I_{\alpha}}c_{\alpha}x^{\alpha}+\sum\limits_{\gamma\in
I_{\gamma}}d_{\gamma}x^{\gamma}y+\sum_{r=1}^{n}h_{r}x^{n-r}=\frac{1}{2}%
x^{m}y^{2} \label{2.1new}%
\end{equation}
where $m\in\left\{  0,\ldots,n+1\right\}  $, $I_{\alpha}$ and $I_{\gamma}$ are
finite subsets of $\mathbb{Z}$ and where $c_{\alpha}$ and $d_{\gamma}$ are
real constants. Taking $n$ copies of (\ref{2.1new}) at points $(x,y)=(\lambda
_{i},\mu_{i})$, $i=1,\dotsc,n$, we obtain a system of $n$ linear equations
(separation relations) for $h_{r}.$ Solving this system yields $n$ functions
(Hamiltonians)%
\begin{equation}
h_{r}=\frac{1}{2}\mu^{T}A_{r}\mu+\sum\limits_{\gamma\in I_{\gamma}}d_{\gamma
}\mu^{T}P_{r}^{(\gamma)}+\sum\limits_{\alpha\in I_{\alpha}}c_{\alpha}%
V_{r}^{(\alpha)},\quad r=1,\dotsc,n \label{2.2new}%
\end{equation}
on a $2n$-dimensional manifold (phase space) $\mathcal{M}$ parametrized by the
coordinates $(\lambda,\mu)=(\lambda_{1},\ldots,\lambda_{n},\mu_{1},\ldots
\mu_{n})$, where
\[
E_{r}=\frac{1}{2}\mu^{T}A_{r}\mu\equiv\frac{1}{2}\mu^{T}K_{r}G\mu\text{,
\ \ }r=1,\ldots,n,\text{\ \ }%
\]
$G\,$ is a contravariant metric tensor on the configurational space $Q$ (such
that $\mathcal{M}=T^{\ast}Q$), $K_{r}$ ($K_{1}=\operatorname{Id}$) are
$(1,1)$-Killing tensors of $G$ (for any $m$ even though $G$ depends on $m$)
\cite{blasz2007}, $P_{r}^{(\gamma)}$ are basic separable vector potentials and
$V_{r}^{(\alpha)}$ are basic separable scalar potentials (see Part I). By
construction, all the Hamiltonian functions $h_{r}$ are in involution
\begin{equation}
\{h_{r},h_{s}\}\equiv\pi(dh_{r},dh_{s})=0,\ \ \ \ \ \ \ r,s=1,...,n
\label{konwencja}%
\end{equation}
with respect to the Poisson bracket $\pi=%
{\textstyle\sum_{i=1}^{n}}
\frac{\partial}{\partial\lambda_{i}}\wedge\frac{\partial}{\partial\mu_{i}}$ on
$\mathcal{M}$. By construction, they also separate in coordinates
$(\lambda,\mu)$. The Hamiltonians (\ref{2.2new}) are known in literature as
(classical) St\"{a}ckel Hamiltonians, while $E_{r}$ are their geodesic parts.

In the separation coordinates $\lambda_{i},\ i=1,...,n$, the geometric objects
$A_{r}$, $G,$ $K_{r},$ $P_{r}^{(\gamma)}$ and $V_{r}^{(\alpha)}$ are
explicitly given by
\begin{align}
\left(  A_{r}\right)  ^{ij}  &  =-\frac{\partial\rho_{r}}{\partial\lambda_{i}%
}\frac{\lambda_{i}^{m}}{\Delta_{i}}\delta^{ij},\text{ \ \ }G^{ij}%
=\frac{\lambda_{i}^{m}}{\Delta_{i}}\delta^{ij},\ \ \ (K_{r})_{j}^{i}%
=-\frac{\partial\rho_{r}}{\partial\lambda_{i}}\delta_{j}^{i},\nonumber\\
& \label{2.4}\\
\ \ \ (P_{r}^{(\gamma)})^{j}  &  =\frac{\partial\rho_{r}}{\partial\lambda_{j}%
}\frac{\lambda_{j}^{\gamma}}{\Delta_{j}}=-(K_{r})_{j}^{j}G^{jj}\lambda
_{j}^{\gamma-m}=-A_{r}^{jj}\lambda_{j}^{\gamma-m},\ \ \ V_{r}^{(\alpha)}%
=\sum_{j=1}^{n}\frac{\partial\rho_{r}}{\partial\lambda_{j}}\frac{\lambda
_{j}^{\alpha}}{\Delta_{j}}\nonumber
\end{align}
(no summation in the above formulas unless explicitly stated) where
$\Delta_{j}=%
{\textstyle\prod\nolimits_{k\neq j}}
(\lambda_{j}-\lambda_{k})$ and $\rho_{r}\left(  \lambda\right)  =(-1)^{r}%
s_{r}(\lambda)$ where $s_{r}(\lambda)$ are elementary symmetric polynomials.
Note that \thinspace the coordinates $\lambda$ are thus orthogonal coordinates
for the metric $G$.

In what follows we will also work in the so called canonical Vi\`{e}te
coordinates $(q,p)$ on $\mathcal{M}$, related with the separation coordinates
$(\lambda,\mu)$ through the point transformation
\begin{equation}
q_{i}=\rho_{i}(\lambda),\ \ \ \ \text{\thinspace}p_{i}=-\sum_{k=1}^{n}%
\frac{\lambda_{k}^{n-i}}{\Delta_{k}}\mu_{k},\text{ \ \ }i=1,\ldots,n.
\label{Viete}%
\end{equation}
Due to (\ref{2.4}) the metric tensor $G$ for arbitrary $m$ is constructed by
\begin{equation}
G=L^{m}G_{0}, \label{n1}%
\end{equation}
where $L$ is the so called special conformal Killing tensor on $Q$
\cite{Ben1997} and where $G_{0}$ is the metric tensor for $m=0$ (thus
$G_{0}^{ij}=\frac{1}{\Delta_{i}}\delta^{ij}$ and $L_{j}^{i}=\lambda_{i}%
\delta^{ij}$). In Vi\`{e}te coordinates $L$ and $G_{0}$ have the form $\ $
\begin{equation}
L=\left(
\begin{array}
[c]{cccc}%
-q_{1} & 1 & 0 & 0\\
\vdots & 0 & \ddots & 0\\
\vdots & 0 & 0 & 1\\
-q_{n} & 0 & 0 & 0
\end{array}
\right)  ,\ \ \ G_{0}=\left(
\begin{array}
[c]{rrrrr}%
0 & 0 & 0 & 0 & 1\\
0 & 0 & 0 & 1 & q_{1}\\
\vdots & \vdots & \vdots & \vdots & \vdots\\
0 & 1 & q_{1} & \cdots & q_{n-2}\\
1 & q_{1} & \cdots & q_{n-2} & q_{n-1}%
\end{array}
\right)  . \label{n2}%
\end{equation}
Further, the $(1,1)$-Killing tensors $K_{r}$ on $Q$ for every $G$ given by
(\ref{n1}), can be effectively computed through \cite{Ben1997}
\[
K_{1}=\operatorname{Id}\text{, \ \ \ }K_{r}=\sum_{k=0}^{r-1}q_{k}%
L^{r-1-k}\text{, \ \ }r=2,\ldots,n.
\]
The scalar potentials $V_{r}^{(\alpha)}$ can be explicitly constructed by a
recursion formula \cite{blasz2011},\cite{part1}. Finally, the basic separable
vector potentials $P_{r}^{(\gamma)}$ in Vi\`{e}te coordinates have the form
\begin{equation}
\left(  P_{r}^{(\gamma)}\right)  ^{j}=-\sum_{s=0}^{r-1}q_{s}V_{j}%
^{(r+\gamma-s-1)}. \label{Pwq}%
\end{equation}
More information about the structure of the above geometric objects is given
in Part I and Part II.

Let us now turn to the issue of quantization of the classical St\"{a}ckel
Hamiltonians (\ref{2.2new}). A natural way to quantize these Hamiltonians is
by the procedure of \emph{minimal quantization} \cite{blasz2016},\cite{Book},
that we shortly remind here.

\begin{definition}
$\label{minq}$Given a Poisson manifold $\mathcal{M}=T^{\ast}Q$ and a metric
$g$ on $Q$, the minimal quantization of the quadratic in momenta function
$H=\frac{1}{2}C^{kj}p_{k}p_{j}\,$\ on $\mathcal{M}$, where $C^{kj}$ is a
symmetric $(2,0)$-tensor on $Q$, is the second-order linear self-adjoint
operator
\begin{equation}
\widehat{H}=-\frac{1}{2}\hslash^{2}\nabla_{k}C^{kj}\nabla_{j}=-\frac{1}%
{2}\hslash^{2}\left\vert g\right\vert ^{-\frac{1}{2}}\partial_{k}\left\vert
g\right\vert ^{\frac{1}{2}}C^{kj}\partial_{j} \label{qq2}%
\end{equation}
acting in the Hilbert space $\mathcal{H}$ $=L^{2}(Q,\left\vert g\right\vert
^{1/2}dq)$, where $\nabla_{j}$ are covariant derivatives for the Levi-Civita
connection of $g$ while $q$ are (any) variables on $Q$ and $\left\vert
g\right\vert =\det g$. Likewise, the minimal quantization of the linear in
momenta function $W=X^{j}p_{j}$, where $X^{j}$ are components of the vector
field $X=X^{j}\frac{\partial}{\partial q_{j}}$ on $Q$, is the first-order
linear operator
\begin{equation}
\widehat{W}=-\frac{1}{2}i\hslash\left(  \nabla_{j}X^{j}+X^{j}\nabla
_{j}\right)  =-\frac{1}{2}i\hslash\left(  \left\vert g\right\vert ^{-\frac
{1}{2}}\partial_{j}\left\vert g\right\vert ^{\frac{1}{2}}X^{j}+X^{j}%
\partial_{j}\right)  , \label{qq3}%
\end{equation}
acting in $\mathcal{H}$ $=L^{2}(Q,\left\vert g\right\vert ^{1/2}dq)$. Finally,
the minimal quantization of a function $f$ on $Q$ is the operator
$\widehat{f}$ of pointwise multiplication by $f$ so that $\widehat{f}=f$.
\end{definition}

The second equalities in (\ref{qq2})\ and (\ref{qq3}) follow by a direct
calculation; note that the right hand sides of these expressions are still
covariant, i.e. they look the same in all coordinate systems on $Q$. In
general, the symbol $\widehat{}$ denotes throughout the article the
$\mathbb{R}$-linear operation of minimal quantization. Thus, the minimal
quantization of Hamiltonians (\ref{2.2new}) is given by the following set of
$n$ self-adjoint operators
\begin{equation}
\widehat{h}_{r}=-\frac{1}{2}\hslash^{2}\nabla_{k}A_{r}^{kj}\nabla_{j}-\frac
{1}{2}i\hslash\sum\limits_{\gamma\in I_{\gamma}}d_{\gamma}\left(  \nabla
_{j}(P_{r}^{(\gamma)})^{j}+(P_{r}^{(\gamma)})^{j}\nabla_{j}\right)
+\sum\limits_{\alpha\in I_{\alpha}}c_{\alpha}V_{r}^{(\alpha)}\text{,
}r=1,\ldots n \label{2.3new}%
\end{equation}
acting in the Hilbert space $\mathcal{H}$ $=L^{2}(Q,\left\vert g\right\vert
^{1/2}dq)$. One can show \cite{Ben2002a,Ben2002b,Book} that in the separation
coordinates $\lambda$ the operators (\ref{2.3new}) take the form%
\begin{equation}%
\begin{array}
[c]{ll}%
\widehat{h}_{r}= & -\frac{1}{2}\hslash^{2}\sum_{j=1}^{n}A_{r}^{jj}\left(
\partial_{j}^{2}-\Gamma_{j}\partial_{j}\right)  -i\hslash\sum\limits_{\gamma
\in I_{\gamma}}\sum_{j=1}^{n}d_{\gamma}\left(  P_{r}^{(\gamma)}\right)
^{j}\left(  \partial_{j}-\frac{1}{2}\Gamma_{j}+\frac{1}{2}(\gamma
-m)\lambda_{j}^{-1}\right)  +\\
& +\sum\limits_{\alpha\in I_{\alpha}}c_{\alpha}V_{r}^{(\alpha)}\text{,
\ \ \ \ }r=1,\ldots n\text{, \ \ \ }\partial_{j}=\frac{\partial}%
{\partial\lambda_{j}}%
\end{array}
\label{2.5}%
\end{equation}
where $\Gamma_{j}$ are so called metrically contracted Christoffel symbols of
$g$, expressed in the orthogonal coordinates $\lambda_{i}$ by
\[
\Gamma_{j}=\frac{1}{2}\partial_{j}\ln\frac{\prod_{k\neq j}G^{kk}}{G^{jj}%
}=-\frac{1}{2}m\lambda_{j}^{-1}.
\]
Thus, $\Gamma_{j}$ satisfy the Robertson condition
\begin{equation}
\partial_{k}\Gamma_{j}=0,\ \ \ k\neq j\ \label{R}%
\end{equation}
and so the corresponding eignevalue problems for $\widehat{h}_{r}$
\[
\widehat{h}_{r}\Psi=\varepsilon_{r}\Psi\text{, \ \ \ }r=1,\ldots,n
\]
are multiplicatively separable. It means that they have for each choice of
eigenvalue $\varepsilon_{r}$ of $\widehat{h}_{r}$ the common multiplicatively
separable eigenfunction $\Psi(\lambda_{1},\dotsc,\lambda_{n})=\prod_{i=1}%
^{n}\psi(\lambda_{i})$ with $\psi$ satisfying the following ODE (quantum
separation relation)
\begin{align}
(\varepsilon_{1}\lambda^{n-1}+\varepsilon_{2}\lambda^{n-1}+\dotsb
+\varepsilon_{n})\psi(\lambda)  &  =-\frac{1}{2}\hbar^{2}\lambda^{m}\left(
\frac{d^{2}\psi(\lambda)}{d\lambda^{2}}+\frac{1}{2}m\lambda^{-1}\frac
{d\psi(\lambda)}{d\lambda}\right) \label{sepe}\\
&  +i\hbar\sum\limits_{\gamma\in I_{\gamma}}\lambda_{j}^{\gamma}\left(
\frac{d\psi(\lambda)}{d\lambda}+\frac{1}{4}(\gamma-2m)\lambda^{-1}\psi
(\lambda)\right)  +\sum\limits_{\alpha\in I_{\alpha}}c_{\alpha}\lambda
^{\alpha}\psi(\lambda).\nonumber
\end{align}
The Robertson condition (\ref{R}) implies that $\Gamma_{j}$ satisfy the
pre-Robertson condition \cite{Ben2002a}%
\[
\partial_{k}(\partial_{j}\Gamma_{j}-\frac{1}{2}\Gamma_{j}^{2})=0,\ \ \ k\neq
j
\]
and thus all the Hamiltonian operators (\ref{2.3new}) commute
\begin{equation}
\lbrack\widehat{h}_{r},\widehat{h}_{s}]=0,\ \ \ \ r,s=1,...,n. \label{2.6}%
\end{equation}

Observe that one can eliminate all the linear in $y$ terms in the algebraic
curve (\ref{2.1new}) through the map%

\begin{equation}
x=x^{\prime},\ \ \ \ y=y^{\prime}+\sum_{\gamma\in I_{\gamma}}d_{\gamma
}x^{\gamma-m},\ \ \ \ j=1,...,n \label{zm}%
\end{equation}
which transforms this curve (\ref{2.1new}) to the algebraic curve without
terms linear in $y$ (below we omit the prime at $x$ and $y$)
\[
\sum\limits_{\alpha\in I_{\alpha}}c_{\alpha}x^{\alpha}+\frac{1}{2}\sum
_{\gamma,\gamma^{\prime}\in I_{\gamma}}d_{\gamma}d_{\gamma^{\prime}}%
x^{\gamma+\gamma^{\prime}-m}+\sum_{r=1}^{n}h_{r}x^{n-r}=\frac{1}{2}x^{m}%
y^{2},\ \ \ \ \ \ \ m,\alpha,\gamma\in\mathbb{Z}%
\]
which leads to the Hamiltonians (\ref{2.2new}) in the new form (in the new
variables) without the vector potential terms:
\begin{equation}
\overline{h}_{r}=\frac{1}{2}\mu^{T}A_{r}(\lambda)\mu+\sum\limits_{\alpha\in
I_{\alpha}}c_{\alpha}x^{\alpha}+\frac{1}{2}\sum_{\gamma,\gamma^{\prime}\in
I_{\gamma}}d_{\gamma}d_{\gamma^{\prime}}V_{r}^{(\gamma+\gamma^{\prime}%
-m)},\quad r=1,\dotsc,n. \label{NF}%
\end{equation}
The transformation (\ref{zm}) induces the following canonical transformation
on $\mathcal{M}$:%

\[
\lambda_{j}=\lambda_{j}^{\prime},\ \ \ \ \mu_{j}=\mu_{j}^{\prime}+\sum
_{\gamma\in I_{\gamma}}d_{\gamma}\lambda_{j}^{\gamma-m},\ \ \ \ j=1,...,n
\]
which transforms the Hamiltonians (\ref{2.2new}) to the form (\ref{NF}).

The minimal quantization of (\ref{NF}) yields the operators (given here in
$\lambda$ variables)
\begin{equation}
\widehat{\overline{h}}_{r}=-\frac{1}{2}\hslash^{2}\sum_{j=1}^{n}A_{r}%
^{jj}\left(  \partial_{j}^{2}-\Gamma_{j}\partial_{j}\right)  +\sum
\limits_{\alpha\in I_{\alpha}}c_{\alpha}V_{r}^{(\alpha)}+\frac{1}{2}%
\sum_{\gamma,\gamma^{\prime}\in I_{\gamma}}d_{\gamma}d_{\gamma^{\prime}}%
V_{r}^{(\gamma+\gamma^{\prime}-m)},\quad r=1,\dotsc,n,\text{ \ \ \ }%
\partial_{j}=\frac{\partial}{\partial\lambda_{j}} \label{2.9}%
\end{equation}
acting in the same Hilbert space $\mathcal{H}$ $=L^{2}(Q,\left\vert
g\right\vert ^{1/2}d\lambda)$ as the operators (\ref{2.5}). Both sets of
operators, (\ref{2.5}) and (\ref{2.9}), are related by the quantum canonical
transformation
\begin{equation}
\widehat{\overline{h}}_{r}=U\,\widehat{h}_{r}\,U^{\dagger}%
,\ \ \ \ \ U=U(\lambda)=e^{F(\lambda)}\text{, }F(\lambda)=\left(  -\frac
{i}{\hslash}\sum_{\gamma\in I_{\gamma}}\frac{d_{\gamma}}{\gamma-m+1}%
\sum\limits_{j=1}^{n}\lambda_{j}^{\gamma-m+1}\right)  . \label{QT}%
\end{equation}
Note that the transformation (\ref{QT}) is covariant i.e. is valid in any
coordinate system $q$ on $Q$. Note also that there is no singularity in
(\ref{QT}) as in case $\gamma=m-1\,$\ the corresponding part of the unitary
operation $U\,\widehat{h}_{r}\,U^{\dagger}$ is formally correctly defined.

\section{Classical Painlev\'{e}-type systems\label{sec 2}}

In Part I we constructed Frobenius integrable non-autonomous Hamiltonian
systems with ordinary potentials, generated by the algebraic curve%

\begin{equation}
\sum_{\alpha=-m}^{2n-m+2}c_{\alpha}(t)x^{\alpha}+\sum_{r=1}^{n}h_{r}%
x^{n-r}=\frac{1}{2}x^{m}y^{2},\ \ \ \ m\in\{0,...,n+1\},\quad\label{sA}%
\end{equation}
so that it has the form (\ref{2.1new}) with $I_{\alpha}=\left\{
-m,\ldots,2n-m+2\right\}  $, $I_{\gamma}=\emptyset$ but with $c_{\alpha
}(t)=c_{\alpha}(t_{1},\ldots,t_{n})$ no longer constant but some, for now
undefined, smooth functions of all times $t_{r}$. Solving the corresponding
separation relations yields $n$ Hamiltonians
\begin{equation}
h_{r}=\frac{1}{2}\mu^{T}A_{r}\mu+\sum_{\alpha=-m}^{2n-m+2}c_{\alpha}%
(t)V_{r}^{(\alpha)},\quad r=1,\dotsc,n, \label{1aa}%
\end{equation}
on the phase space $\mathcal{M}$, with the $(2,0)$-tensors $A_{r}$ and the
basic separable potentials given by (\ref{2.4}). Following the method
developed in \cite{arxiv} and in Part I, we now perturb the Hamiltonians
$h_{r}$ to the Hamiltonians
\begin{equation}
h_{1}^{A}=h_{1},\ \ \ \ h_{r}^{A}=h_{r}+W_{r},\ \ \ \ \text{\ }r=2,\ldots,n,
\label{hA}%
\end{equation}
where $W_{r}$ are linear in momenta terms (quasi-St\"{a}ckel terms) given in
$(q,p)$ coordinates by%
\[
W_{r}=\sum\limits_{j=1}^{n}J_{r}^{j}p_{j}%
\]
where $J_{r}$, $r=2,\ldots,n$, are $n-1$ vector fields on $Q$ given by%
\begin{align}
J_{r}  &  =\sum\limits_{k=n+2-m-r}^{n-m}(n+1-m-k)q_{m+r-n-2+k}\partial
_{k}\text{, \ \ \ \ }r\in I_{1}^{m},\label{Zd}\\
J_{r}  &  =-\sum\limits_{k=n+2-m}^{2n+2-m-r}(n+1-m-k)q_{m+r-n-2+k}\partial
_{k}\text{, \ \ \ \ }r\in I_{2}^{m}, \label{Zg}%
\end{align}
(note that they do depend on $m$, as $G$ does) that are Killing vector fields
for the metric $G$. Here and in what follows $\partial_{i}=\partial/\partial
q_{i}$. The index sets $I_{1}^{m}$ and $I_{1}^{m}$ are defined as follows:
\begin{equation}
I_{1}^{m}=\{2,\ldots,n-m+1\},\qquad I_{2}^{m}=\{n-m+2,\ldots,n\},\qquad
m=0,\ldots,n+1. \label{I}%
\end{equation}
with the degenerations $I_{2}^{m}=\emptyset$ for $m=0,1$ while $I_{1}%
^{m}=\emptyset$ for $m=n,n+1$. Here and throughout the paper we use the
notation $q_{0}=1$ and $q_{r}=0$ for $r<0$ and for $r>n\,$. We also set
$W_{1}=0$.

This particular choice of Killing vector fields $J_{r}$ from the whole algebra
of Killing vector fields for $G$ is motivated by the following important
observation. It can be shown \emph{\cite{mb}} that the functions
$\mathcal{E}_{r}=E_{r}+W_{r}$ (called the \emph{geodesic quasi-St\"{a}ckel
Hamiltonians}) span the Lie algebra $\mathfrak{g}=\mathrm{span}\{\mathcal{E}%
_{r},\ $ $r=1,\ldots,n\}$ with the commutation relations
\[
\{\mathcal{E}_{1},\mathcal{E}_{r}\}=0,\quad r=2,\dots,n,
\]
and
\begin{equation}
\{\mathcal{E}_{r},\mathcal{E}_{s}\}=%
\begin{cases}
0 & \text{for $r\in I_{1}^{m}$ and $s\in I_{2}^{m}$},\\
(s-r)\mathcal{E}_{r+s-(n-m+2)} & \text{for $r,s\in I_{1}^{m}$},\\
-(s-r)\mathcal{E}_{r+s-(n-m+2)} & \text{for $r,s\in I_{2}^{m}$},
\end{cases}
\label{str}%
\end{equation}
where $\mathcal{E}_{i}=0$ as soon as $i\leq0$ or $i>n$. The algebra
$\mathfrak{g}$ has an Abelian subalgebra
\begin{equation}
\mathfrak{a}=\mathrm{span}\left\{  \mathcal{E}_{1},\dotsc,\mathcal{E}%
_{\kappa_{1}},\mathcal{E}_{n-\kappa_{2}+1},\dotsc,\mathcal{E}_{n}\right\}
\label{gma}%
\end{equation}
where%
\begin{equation}
\kappa_{1}=\left[  \frac{n+3-m}{2}\right]  ,\qquad\kappa_{2}=\left[  \frac
{m}{2}\right]  . \label{kappy}%
\end{equation}
Finally, let us construct new Hamiltonians $H_{r}^{A}$ such that for
$r\in\{1\}\cup I_{1}^{m}$
\begin{align}
H_{r}^{A}  &  =h_{r}^{A},\quad\text{for $r=1,\dotsc,\kappa_{1}$},\nonumber\\
H_{r}^{A}  &  =\sum_{j=1}^{r}\zeta_{r,j}(t_{1},\dotsc,t_{r-1})h_{j}^{A}%
,\quad\zeta_{r,r}=1,\quad\text{for $r=\kappa_{1}+1,\dotsc,n-m+1$} \label{7b}%
\end{align}
and for $r\in I_{2}^{m}$
\begin{align}
H_{r}^{A}  &  =\sum_{j=0}^{n-r}\zeta_{r,r+j}(t_{r+1},\dotsc,t_{n})h_{r+j}%
^{A},\quad\zeta_{r,r}=1,\quad\text{for $r=n-m+2,\dotsc,n-\kappa_{2}$%
},\nonumber\\
H_{r}^{A}  &  =h_{r}^{A},\quad\text{for $r=n-\kappa_{2}+1,\dotsc,n$},
\label{7c}%
\end{align}
where $\zeta_{r,j}(t)$ are some functions of appropriate evolution parameters.
The Hamiltonians $H_{r}^{A}$ define $n$ non-autonomous Hamiltonian systems on
$\mathcal{M}$
\begin{equation}
\xi_{t_{r}}=Y_{r}(\xi,t)=\pi dH_{r}^{A}(\xi,t),\qquad r=1,\dots,n \label{nhs}%
\end{equation}
and according to results obtained in \cite{arxiv} and in Part I one can always
obtain (by solving an appropriate compatible and overdetermined system of
PDE's) the explicit form of functions $c_{\alpha}(t_{1},\ldots,t_{n})$,
$\zeta_{r,j}(t_{1},\dotsc,t_{r-1})$ and $\zeta_{r,r+j}(t_{r+1},\dotsc,t_{n})$
such that the Hamiltonians $H_{r}^{A}$ in (\ref{7b}) and (\ref{7c}) satisfy
the (classical) Frobenius integrability condition%
\begin{equation}
\frac{\partial H_{r}}{\partial t_{s}}-\frac{\partial H_{s}}{\partial t_{r}%
}+\{H_{r},H_{s}\}=f_{rs}(t_{1},\ldots,t_{n}),\quad r,s=1,\dots,n \label{fcg}%
\end{equation}
with $H_{r}=H_{r}^{A}$. It means that $n$ Hamiltonian systems (\ref{nhs}) have
(at least for small intervals of times $t_{i}$) a unique common multi-time
solution $\xi=\xi(t_{1},\dots,t_{n},\xi_{0})$ for any initial condition
$\xi_{0}$ \cite{Fecko,Lundell}. In consequence, the non-autonomous Hamiltonian
vector fields $Y_{r}\ $on $\mathcal{M}$ satisfy the Frobenius condition%
\begin{equation}
\frac{\partial Y_{r}}{\partial t_{s}}-\frac{\partial Y_{s}}{\partial t_{r}%
}-\left[  Y_{r},Y_{s}\right]  =0\text{ for all}\ r,s=1,\ldots,n \label{genFr}%
\end{equation}
(see Part I for the details of this construction; the second minus sign in
(\ref{genFr}) is due to the convention used in (\ref{konwencja})).

In Part I we also constructed Frobenius integrable non-autonomous Hamiltonian
systems with vector potentials, generated by the following algebraic curve%

\begin{equation}
\sum_{\gamma=0}^{n+1}d_{\gamma}(t)x^{\gamma}y+\sum_{r=1}^{n}x^{n-r}h_{r}%
=\frac{1}{2}x^{m}y^{2}, \label{mag}%
\end{equation}
i.e. the curve (\ref{2.1new}) with $I_{\alpha}=\emptyset$, $I_{\gamma
}=\left\{  0,\ldots,n+1\right\}  $ and with $d_{\gamma}(t)=d_{\gamma}%
(t_{1},\dotsc,t_{n})$ no longer constant but, for the moment arbitrary,
functions of all times $t_{i}$. Solving the corresponding separation relations
with respect to $h_{r}$ yields $n$ Hamiltonians
\begin{equation}
h_{r}=\frac{1}{2}\mu^{T}A_{r}\mu+\sum_{\gamma=0}^{n+1}d_{\gamma}(t)\mu
^{T}P_{r}^{(\gamma)},\quad r=1,\dotsc,n \label{mag1}%
\end{equation}
on $\mathcal{M}=T^{\ast}Q$, with the contravariant tensors $A_{r}$ and with
the vector potentials $P_{r}^{(\gamma)}$ given by (\ref{2.4}) that enter the
Hamiltonians (\ref{mag1}) through the (linear in momenta)\ magnetic terms%
\begin{equation}
M_{r}^{(\gamma)}=\mu^{T}P_{r}^{(\gamma)}. \label{MT}%
\end{equation}
Deforming the Hamiltonians (\ref{mag1}) to the Hamiltonians $h_{r}^{B}%
=h_{r}+W_{r}$ (cf. (\ref{hA})) using the same quasi-St\"{a}ckel terms
$W_{r}=J_{r}^{j}p_{j}$ with the same vector fields $J_{r}$ given by
(\ref{Zd})\ and (\ref{Zg}) we were able to construct, through an appropriate
choice of the functions $\zeta_{r}$ in (\ref{7b}), (\ref{7c}), as well as an
appropriate choice of $d_{\gamma}(t)$ in (\ref{mag1}), the set of $n$
non-autonomous Hamiltonians $H_{r}^{B}(t)$, $r=1,...,n,$ satisfying the
Frobenius integrability condition (\ref{fcg}).

In Part II\ we have constructed the isomonodromic Lax representation for the
non-homogeneous systems $\xi_{t_{r}}=\pi dH_{r}^{A}(\xi,t)$ and $\xi_{t_{r}%
}=\pi dH_{r}^{B}(\xi,t)$ thus proving that the are indeed of Painlev\'{e}-type.

\section{Quantization of Painlev\'{e}-type systems\label{s4}}

As we explained in Introduction, the aim of this paper is to show that the
deformation procedures, developed in \cite{arxiv} and in Part I, and shortly
presented in the previous section, have its quantum counterpart. In this
section we prove that the minimally quantized Painlev\'{e}-type systems, both
with ordinary and with magnetic potentials, satisfy the quantum Frobenius
condition (\ref{fcq}).

\subsection{Quantum Frobenius condition}

Consider the set of Schr\"{o}dinger equations%
\begin{equation}
i\hslash\frac{\partial\Psi}{\partial t_{r}}=\widehat{H}_{r}\Psi\text{,
\ }r=1,\ldots,n \label{Sch}%
\end{equation}
where $\widehat{H}_{r}$ is a set of $n$ linear operators acting in a Hilbert
space $\mathcal{H}$. A necessary condition for the existence of a common
multi-time solution $\Psi(t_{1},\ldots,t_{n})$ of the system (\ref{Sch}) has
the form
\[
\frac{\partial^{2}\Psi}{\partial t_{r}\partial t_{s}}=\frac{\partial^{2}\Psi
}{\partial t_{s}\partial t_{r}}\text{ for any }r,s=1,\ldots,n.
\]
Inserting it into (\ref{Sch}) leads to the following necessary condition for
the existence of common solutions of (\ref{Sch})
\begin{equation}
i\hslash\frac{\partial\widehat{H}_{r}}{\partial t_{s}}-i\hslash\frac
{\partial\widehat{H}_{s}}{\partial t_{r}}+\left[  \widehat{H}_{r}%
,\widehat{H}_{s}\right]  =0\text{, \ }r,s=1,\ldots,n. \label{fcq}%
\end{equation}
We will refer to the condition (\ref{fcq}) as the quantum Frobenius condition.
We stress that while the right hand sides $f_{rs}$ of the classical Frobenius
condition (\ref{fcg}) may in general depend on all the times $t_{i}$, the
right hand side of (\ref{fcq}) must be zero. This however does not restrict
our method, described below, as it is always possible in the classical regime
to choose the functions $c_{\alpha}$\ so that $f_{rs}(t)=0$ (while in the
magnetic case $f_{rs}$ are always zero).

\subsection{Quantization of systems with ordinary potentials}

We start with the case of ordinary potentials. According to Definition
\ref{minq}, the minimal quantization $\widehat{H}_{r}^{A}$\ of Hamiltonians
$H_{r}^{A}$ is obtained by replacing the Hamiltonians $h_{r}^{A}$ in
(\ref{7b}) and (\ref{7c}) by the self-adjoint operators
\begin{equation}
\widehat{h}_{r}^{A}=\widehat{E}_{r}+\widehat{W}_{r}+\sum\limits_{\alpha
=-m}^{2n-m+2}c_{\alpha}(t_{1},\ldots,t_{n})V_{r}^{(\alpha)}%
,\ \ \ \ \ r=1,...,n, \label{q1}%
\end{equation}
where, due to (\ref{qq2}) and (\ref{qq3})
\begin{align}
\widehat{E}_{r}  &  =-\frac{1}{2}\hslash^{2}\nabla_{k}A_{r}^{kj}\nabla
_{j}=-\frac{1}{2}\hslash^{2}\left\vert g\right\vert ^{-\frac{1}{2}}%
\partial_{k}\left\vert g\right\vert ^{\frac{1}{2}}A_{r}^{kj}\partial
_{j}\label{q2}\\
\widehat{W}_{r}  &  =-\frac{1}{2}i\hslash\left(  \nabla_{j}J_{r}^{j}+J_{r}%
^{j}\nabla_{j}\right)  =-\frac{1}{2}i\hslash\left(  \left\vert g\right\vert
^{-\frac{1}{2}}\partial_{j}\left\vert g\right\vert ^{\frac{1}{2}}J_{r}%
^{j}+J_{r}^{j}\partial_{j}\right)  \label{q3}%
\end{align}
are operators acting in the Hilbert space $\mathcal{H}$ $=L^{2}(Q,\left\vert
g\right\vert ^{1/2}dq)$.

\begin{theorem}
\label{MAIN}For any fixed $m\in\left\{  0,\ldots,n+1\right\}  $, the set of
$n$ operators $\widehat{H}_{r}^{A}$, $r=1,\ldots n$, given by (\ref{7b}),
(\ref{7c}) with $h_{r}^{A}$ replaced by \ $\widehat{h}_{r}^{A}$ given by
(\ref{q1}), satisfies the quantum Frobenius condition (\ref{fcq}).
\end{theorem}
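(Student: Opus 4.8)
The plan is to reduce the quantum Frobenius condition \eqref{fcq} for the $\widehat H_r^A$ to a small number of structural identities, mirroring exactly the logic used in the classical case in Part~I. First I would split each operator as $\widehat h_r^A=\widehat E_r+\widehat W_r+\widehat V_r$ with $\widehat V_r=\sum_\alpha c_\alpha(t)V_r^{(\alpha)}$, and observe that the only $t$-dependence sits in the $c_\alpha(t)$ and in the $\zeta$-coefficients of \eqref{7b}–\eqref{7c}; hence $\partial_{t_s}\widehat H_r^A$ is itself a linear combination (with $t$-dependent coefficients) of the $\widehat h_j^A$ plus multiplication operators $\partial_{t_s}c_\alpha\cdot V_j^{(\alpha)}$. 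Expanding the commutator $[\widehat H_r^A,\widehat H_s^A]$ bilinearly over the three pieces, I would collect the result into contributions of homogeneity $2$, $1$ and $0$ in $\hslash$ (i.e.\ the purely second-order, the first-order, and the zeroth-order parts), and match \eqref{fcq} degree by degree. Because minimal quantization is $\mathbb R$-linear and sends Poisson brackets of the relevant type to commutators up to the usual $i\hslash$ (this is precisely what \eqref{2.6} and the pre-Robertson discussion encode), each degree should reduce to a classical identity already established in Part~I.

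The key steps are then: (i) the $\hslash^2$-part. Here only $[\widehat E_r,\widehat E_s]$ and the $\widehat W$–$\widehat E$ cross terms of top order contribute; since the $A_r$ are Killing tensors of a metric satisfying the Robertson condition \eqref{R}, the operators $\widehat E_r$ (equivalently the $\widehat h_r$ of \eqref{2.3new}) commute by \eqref{2.6}, and the $J_r$ are Killing vectors of $G$ so that $[\widehat W_r,\widehat E_s]+[\widehat E_r,\widehat W_s]$ reproduces the quantization of $\{W_r,E_s\}+\{E_r,W_s\}$ with no anomaly — this is the statement that $\mathcal E_r=E_r+W_r$ quantize compatibly, and it is controlled by \eqref{str}. (ii) The $\hslash^1$-part: $[\widehat W_r,\widehat W_s]$ together with the $\widehat E$–$\widehat V$ and $\widehat W$–$\widehat V$ cross terms, matched against $\partial_{t_s}\widehat H_r^A-\partial_{t_r}\widehat H_s^A$ restricted to its first-order piece. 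Since $[\widehat W_r,\widehat W_s]$ is the quantization of $\{W_r,W_s\}$ (both are first-order differential operators built from vector fields, and their commutator is the quantization of the Lie bracket — no ordering ambiguity at this order for Killing vectors), this reduces to the classical Lie-algebra relations \eqref{str} for $\mathfrak g$, which are exactly what forced the classical $f_{rs}$ computation. (iii) The $\hslash^0$-part: $[\widehat V_r,\widehat V_s]=0$ since potentials are multiplication operators, and the surviving terms $[\widehat W_r,\widehat V_s]-[\widehat W_s,\widehat V_r]$ (first order in $\hslash$, zeroth in differentiation after the derivatives hit the potentials) together with $[\widehat E_r,\widehat V_s]$ match the $\partial_{t}c_\alpha\cdot V^{(\alpha)}$ terms; this is where one invokes that the classical construction already chose $c_\alpha(t)$ and $\zeta_{r,j}(t)$ solving the overdetermined PDE system so that the classical Frobenius condition \eqref{fcg} holds, and — crucially — that one may arrange $f_{rs}(t)=0$, as remarked after \eqref{fcq}.

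The structural point making all this work is that the commutator of two minimally quantized observables of the form (geodesic + linear + scalar) equals $i\hslash$ times the minimal quantization of their Poisson bracket, with the quantum correction vanishing precisely because $G$ obeys the Robertson/pre-Robertson conditions and the $J_r$ are genuine Killing vectors; I would state and use this as the central lemma (it is already implicit in \eqref{2.6} and the cited \cite{Ben2002a,Ben2002b,Book}). Granting it, \eqref{fcq} becomes $i\hslash$ times the minimal quantization of the left-hand side of \eqref{fcg} with $H_r=H_r^A$, which is $i\hslash\,\widehat{f_{rs}}=0$.

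The main obstacle I anticipate is establishing the no-anomaly lemma in the generality needed — in particular controlling the cross terms between the linear quasi-St\"ackel part $\widehat W_r$ and the geodesic part $\widehat E_s$, where operator ordering could in principle produce an $\hslash^2$ (or $\hslash^3$) term proportional to derivatives of the metrically-contracted Christoffel symbols $\Gamma_j$; showing these vanish is exactly where the Robertson and pre-Robertson conditions \eqref{R} must be used carefully (most transparently in the separation coordinates $\lambda$, where $\Gamma_j=-\tfrac12 m\lambda_j^{-1}$ and the $J_r$ have the explicit triangular form \eqref{Zd}–\eqref{Zg}). Once that is dispatched, the remaining work is the bookkeeping of matching the $t$-derivative terms against the classical PDE solution, which carries over verbatim from Part~I.
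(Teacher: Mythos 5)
Your proposal follows essentially the same route as the paper: split $\widehat h_r^A=\widehat E_r+\widehat W_r+\widehat V_r$, reduce \eqref{fcq} to a ``no quantum anomaly'' statement (the paper's \eqref{wk1} and Theorem~\ref{algebra}), handle the sub-commutators $[\widehat W_r,\widehat W_s]$, the mixed $[\widehat W_r,\widehat E_s]$ terms, and the $\widehat W$--$V^{(\alpha)}$ terms separately (Lemmas~\ref{l2} and~\ref{l3}), and then observe that the classical PDE system for $c_\alpha$, $\zeta_{r,j}$ carries over verbatim; your ``match degree by degree'' organization is a cosmetic relabeling of the paper's term-by-term expansion. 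One small caveat on your anticipated obstacle: in the paper the delicate cancellation in $[\widehat W_r,\widehat E_s]+[\widehat E_r,\widehat W_s]$ (Lemma~\ref{l3}) is not driven by the Robertson/pre-Robertson conditions themselves but by the explicit algebraic facts that the $J_r$ are linear in $q$ (so $(J_r^l)_{q_jq_k}=0$), that $|g|=\varepsilon q_n^{-m}$ has a single nontrivial derivative, and by differentiating the classical Schouten-bracket identity \eqref{q10} to get \eqref{q11}; Robertson is only invoked for the base St\"ackel commutativity $[\widehat E_r,\widehat E_s]=0$. So the hard step is an elementary Vi\`ete-coordinate calculation rather than a separation-theory argument, and you would do well to work in $(q,p)$ rather than $\lambda$ for that piece.
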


We will prove this theorem by showing that the proof of the classical version
of this theorem, as it is presented in Part I, survives in the quantum regime.
By results in \cite{arxiv} and in Part I, also mentioned in Section
\ref{sec 2}, the Hamiltonians $\widehat{H}_{r}^{A}$ will satisfy the quantum
Frobenius condition (\ref{fcq}) as soon as any pair of operators
$\widehat{\mathcal{E}}_{r}+V_{r}^{(\alpha)}\equiv\widehat{E}_{r}%
+\widehat{W}_{r}+V_{r}^{(\alpha)}$ and $\widehat{\mathcal{E}}_{s}%
+V_{s}^{(\alpha)}\equiv\widehat{E}_{s}+\widehat{W}_{s}+V_{s}^{(\alpha)}$ will,
for any fixed $m\in\left\{  0,\ldots,n+1\right\}  $, satisfy the same, up to
the factor $i\hslash$, commutation relations as their classical counterparts
$\mathcal{E}_{r}+V_{r}^{(\alpha)}\equiv E_{r}+W_{r}+V_{r}^{(\alpha)}$ and
$\mathcal{E}_{s}+V_{s}^{(\alpha)}\equiv E_{s}+W_{s}+V_{s}^{(\alpha)}$. That
is, we have to prove that
\begin{equation}
i\hslash\widehat{\left\{  E_{r}+W_{r}+V_{r}^{(\alpha)},E_{s}+W_{s}%
+V_{s}^{(\alpha)}\right\}  }=\left[  \widehat{E}_{r}+\widehat{W}_{r}%
+V_{r}^{(\alpha)},\widehat{E}_{s}+\widehat{W}_{s}+V_{s}^{(\alpha)}\right]
\label{wk1}%
\end{equation}
for all $r,s=1,\ldots,n$, for all $m\in\left\{  0,\ldots,n+1\right\}  $ and
for all $\alpha\in\left\{  -m,\ldots,2n-m+2\right\}  $, as the relation
(\ref{wk1}) means that we can follow the procedure from Section \ref{sec 2}
also in the quantum case and this procedure will yield the same PDE's for
functions $c_{\alpha}(t_{1},\ldots,t_{n})$, $\zeta_{r,j}(t_{1},\dotsc
,t_{r-1})$ and $\zeta_{r,r+j}(t_{r+1},\dotsc,t_{n})$ as in the corresponding
classical procedure. We will perform all calculations in Vi\`{e}te coordinates
$(q,p)$.

Before we proceed, let us rewrite (\ref{q2}) and (\ref{q3}) with the operators
$\partial_{j}$ standing maximally to the right. Note that due to (\ref{n2}) we
have $\det L=(-1)^{n}q_{n}$ and $\det G_{0}=(-1)^{[n/2]}$ (where $[\cdot]$
denotes the integer part) so that, due to (\ref{n1})%

\begin{equation}
\left\vert g\right\vert =\varepsilon q_{n}^{-m}\text{ with }\varepsilon
=(-1)^{nm+[n/2]}\text{.} \label{q4}%
\end{equation}
Thus, $\partial_{j}\left\vert g\right\vert =-\varepsilon mq_{n}^{-m-1}%
\delta_{j,n}$ and a direct calculation yields that
\[
\widehat{E}_{r}=-\frac{1}{2}\hslash^{2}\left(  A_{r}^{kj}\partial_{k}%
\partial_{j}+(A_{r}^{kj})_{q_{k}}\partial_{j}-\frac{1}{2}\varepsilon^{2}%
mq_{n}^{-1}A_{r}^{nj}\partial_{j}\right)
\]
while%
\[
\widehat{W}_{r}=-i\hslash\left(  J_{r}^{j}\partial_{j}+\frac{1}{2}\left\vert
g\right\vert ^{-1/2}\left(  \partial_{j}\left\vert g\right\vert ^{1/2}%
J_{r}^{j}\right)  \right)  .
\]
A straightforward calculation shows that the zero-order term in $\widehat{W}%
_{r}$ is equal to $0$ for all $r$. Indeed, due to (\ref{q4}) we have%
\[
\left\vert g\right\vert ^{-1/2}\left(  \partial_{j}\left\vert g\right\vert
^{1/2}J_{r}^{j}\right)  =-\frac{1}{2}mq_{n}^{-1}J_{r}^{n}+\left(  \partial
_{j}J_{r}^{j}\right)  .
\]
Assume first that $r\in I_{1}^{m}$. Then, due to (\ref{Zd}), $J_{r}^{n}=0$ as
soon as $m>0$ and thus $mq_{n}^{-1}J_{r}^{n}$ is always (for all $m$) zero.
Further, again due to (\ref{Zd})%

\[
\left(  \partial_{j}J_{r}^{j}\right)  =(n-m-j+1)\delta_{r,n-m+2}=0\text{ (no
summation)}%
\]
due to definition of $I_{1}^{m}$. A similar, albeit a little bit more tedious,
calculation shows that the same result holds for any $r\in I_{2}^{m} $. In
consequence the Hamiltonian operators (\ref{q2}) and (\ref{q3}) can be written
as
\begin{align}
\widehat{E}_{r}  &  =-\frac{1}{2}\hslash^{2}\left(  A_{r}^{kj}\partial
_{k}\partial_{j}+(A_{r}^{kj})_{q_{k}}\partial_{j}-\frac{1}{2}mq_{n}^{-1}%
A_{r}^{nj}\partial_{j}\right) \nonumber\\
& \label{q8}\\
\widehat{W}_{r}  &  =-i\hslash J_{r}^{j}\partial_{j}=-i\hslash J_{r}.\nonumber
\end{align}
We are now in position to prove the condition (\ref{wk1}). Obviously, for any
$r,s\in\left\{  1,\ldots,n\right\}  $%
\begin{align*}
\left\{  E_{r}+W_{r}+V_{r}^{(\alpha)},E_{s}+W_{s}+V_{s}^{(\alpha)}\right\}
&  =\left\{  \mathcal{E}_{r},\mathcal{E}_{s}\right\}  +\left\{  E_{r}%
,V_{s}^{(\alpha)}\right\}  +\left\{  V_{r}^{(\alpha)},E_{s}\right\} \\
&  +\left\{  W_{r},V_{s}^{(\alpha)}\right\}  +\left\{  V_{r}^{(\alpha)}%
,W_{s}\right\}  +\left\{  V_{r}^{(\alpha)},V_{s}^{(\alpha)}\right\}  .
\end{align*}
The first term $\left\{  \mathcal{E}_{r},\mathcal{E}_{s}\right\}  $ is given
by (\ref{str}). The last term $\left\{  V_{r}^{(\alpha)},V_{s}^{(\alpha
)}\right\}  $ is obviously $0$. Moreover, St\"{a}ckel Hamiltonians themselves
(i.e. without the quasi-St\"{a}ckel term $W_{r}$), both geodesic and with
potentials, commute with each other, so that $\left\{  E_{r},E_{s}\right\}
=0$ and $\left\{  E_{r}+V_{r}^{(\alpha)},E_{s}+V_{s}^{(\alpha)}\right\}  =0$,
which yields%
\[
\left\{  E_{r},V_{s}^{(\alpha)}\right\}  +\left\{  V_{r}^{(\alpha)}%
,E_{s}\right\}  =0.
\]
Thus, the left hand side of (\ref{wk1}) becomes
\begin{equation}
i\hslash\widehat{\left\{  \mathcal{E}_{r},\mathcal{E}_{s}\right\}  }%
+i\hslash\widehat{\left(  \left\{  W_{r},V_{s}^{(\alpha)}\right\}  +\left\{
V_{r}^{(\alpha)},W_{s}\right\}  \right)  } \label{lewa}%
\end{equation}
On the other hand, the right hand side of (\ref{wk1}) is%

\[
\lbrack\widehat{\mathcal{E}}_{r},\widehat{\mathcal{E}}_{s}]+\left[
\widehat{E}_{r},V_{s}^{(\alpha)}\right]  +\left[  V_{r}^{(\alpha)}%
,\widehat{E}_{s}\right]  +\left[  \widehat{W}_{r},V_{s}^{(\alpha)}\right]
+\left[  V_{r}^{(\alpha)},\widehat{W}_{s}\right]  +\left[  V_{r}^{(\alpha
)},V_{s}^{(\alpha)}\right]
\]
(also here the last term is $0$) where the operators $\widehat{\mathcal{E}%
}_{r}$ denote minimal quantization of respective geodesic quasi-St\"{a}ckel
Hamiltonians $\mathcal{E}_{r}$%
\begin{equation}
\widehat{\mathcal{E}}_{r}=\widehat{E}_{r}+\widehat{W}_{r},\ r=1,\ldots,n.
\label{hq}%
\end{equation}
We moreover know that the quantum St\"{a}ckel Hamiltonians themselves (i.e.
without the quasi-St\"{a}ckel term $\widehat{W}_{r}$), both geodesic and with
potentials, commute with each other, due to (\ref{2.6}), i.e. $\left[
\widehat{E}_{r},\widehat{E}_{s}\right]  =0$ and $\left[  \widehat{E}_{r}%
+V_{r}^{(\alpha)},\widehat{E}_{s}+V_{s}^{(\alpha)}\right]  =0$. This also
yields%
\[
\left[  \widehat{E}_{r},V_{s}^{(\alpha)}\right]  +\left[  V_{r}^{(\alpha
)},\widehat{E}_{s}\right]  =0
\]
and so the right hand side of (\ref{wk1}) becomes%
\begin{equation}
\lbrack\widehat{\mathcal{E}}_{r},\widehat{\mathcal{E}}_{s}]+\left[
\widehat{W}_{r},V_{s}^{(\alpha)}\right]  +\left[  V_{r}^{(\alpha)}%
,\widehat{W}_{s}\right]  . \label{prawa}%
\end{equation}
Further
\[
\{W_{r},V_{s}^{(\alpha)}\}+\{V_{r}^{(\alpha)},W_{s}\}=-J_{r}^{j}%
(V_{s}^{(\alpha)})_{q_{j}}+J_{s}^{j}(V_{r}^{(\alpha)})_{q_{j}},
\]
is thus an expression on $Q$ not on $\mathcal{M}$ i.e. a function on $Q$,
while on the quantum level
\begin{align*}
\lbrack\widehat{W}_{r},V_{s}^{(\alpha)}]+[V_{r}^{(\alpha)},\widehat{W}_{s}]
&  =-i\hslash\left(  \lbrack J_{r}^{j}\partial_{j},V_{s}^{(\alpha)}%
]+[V_{r}^{(\alpha)},J_{s}^{j}\partial_{j}]\right) \\
&  =-i\hslash\left(  J_{r}^{j}(V_{s}^{(\alpha)})_{q_{j}}-J_{s}^{j}%
(V_{r}^{(\alpha)})_{q_{j}}\right) \\
&  =i\hslash\left(  \{W_{r},V_{s}^{(\alpha)}\}+\{V_{r}^{(\alpha)}%
,W_{s}\}\right)
\end{align*}
and thus (see Definition \ref{minq})%

\[
i\hslash\widehat{\left(  \{W_{r},V_{s}^{(\alpha)}\}+\{V_{r}^{(\alpha)}%
,W_{s}\}\right)  }=[\widehat{W}_{r},V_{s}^{(\alpha)}]+[V_{r}^{(\alpha
)},\widehat{W}_{s}].
\]
It remains to show that $i\hslash\widehat{\left\{  \mathcal{E}_{r}%
,\mathcal{E}_{s}\right\}  }=[\widehat{\mathcal{E}}_{r},\widehat{\mathcal{E}%
}_{s}]$ (that actually shows that the operators $\widehat{\mathcal{E}}_{r}$
constitute a Lie algebra with the same, up to the factor $i\hslash$, structure
constants as the algebra (\ref{str}) generated by their classical counterparts
$\mathcal{E}_{r}=E_{r}+W_{r}$).

\begin{theorem}
\label{algebra}The operators $\widehat{\mathcal{E}}_{r}$ in (\ref{hq}) satisfy
the commutation relations%
\[
\lbrack\widehat{\mathcal{E}}_{1},\widehat{\mathcal{E}}_{r}]=0\text{,
\ \ \ }r=2,\ldots,n,
\]%
\[
\lbrack\widehat{\mathcal{E}}_{r},\widehat{\mathcal{E}}_{s}]=%
\begin{cases}
0, & \text{for }r\in I_{1}^{m}\text{ and }s\in I_{2}^{m},\\
i\hslash(s-r)\widehat{\mathcal{E}}_{r+s-(n-m+2)}, & \text{for }r,s\in
I_{1}^{m},\\
-i\hslash(s-r)\widehat{\mathcal{E}}_{r+s-(n-m+2)}, & \text{for }r,s\in
I_{2}^{m}.
\end{cases}
\]
(so that $[\widehat{\mathcal{E}}_{r},\widehat{\mathcal{E}}_{s}]=i\hslash
\widehat{\left\{  \mathcal{E}_{r},\mathcal{E}_{s}\right\}  }$).
\end{theorem}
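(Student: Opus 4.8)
The plan is to reduce the operator identity $[\widehat{\mathcal{E}}_r,\widehat{\mathcal{E}}_s]=i\hslash\,\widehat{\{\mathcal{E}_r,\mathcal{E}_s\}}$ to three separate ingredients coming from the three pieces $\widehat{\mathcal{E}}_r=\widehat{E}_r+\widehat{W}_r$. Expanding bilinearly,
\begin{equation*}
[\widehat{\mathcal{E}}_r,\widehat{\mathcal{E}}_s]=[\widehat{E}_r,\widehat{E}_s]+[\widehat{E}_r,\widehat{W}_s]+[\widehat{W}_r,\widehat{E}_s]+[\widehat{W}_r,\widehat{W}_s].
\end{equation*}
The first term vanishes by (\ref{2.6}). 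For the last term, since by (\ref{q8}) $\widehat{W}_r=-i\hslash J_r$ is (up to the factor $-i\hslash$) just the Lie derivative along the Killing field $J_r$, one has $[\widehat{W}_r,\widehat{W}_s]=-\hslash^2[J_r,J_s]$ where $[J_r,J_s]$ is the Lie bracket of vector fields; and since the classical Poisson bracket of the linear-in-momenta functions $W_r=J_r^jp_j$ is $\{W_r,W_s\}=-[J_r,J_s]^jp_j$, whose minimal quantization is $-(-i\hslash)[J_r,J_s]=i\hslash[J_r,J_s]$, we get $[\widehat{W}_r,\widehat{W}_s]=i\hslash\,\widehat{\{W_r,W_s\}}$. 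Thus the whole problem collapses to the cross terms: I must show
\begin{equation*}
[\widehat{E}_r,\widehat{W}_s]+[\widehat{W}_r,\widehat{E}_s]=i\hslash\,\widehat{\bigl(\{E_r,W_s\}+\{W_r,E_s\}\bigr)}.
\end{equation*}

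For this I would use the fact that $J_r$ is a Killing vector field of $G$, hence also of the degenerate metric tensors $A_r=K_rG$ in the appropriate sense. On the classical side $\{E_r,W_s\}=\{\tfrac12\mu^TA_r\mu,\,J_s^jp_j\}$ is again quadratic in momenta, with $(2,0)$-tensor $\mathcal{L}_{J_s}A_r$ (the Lie derivative), so $\{E_r,W_s\}+\{W_r,E_s\}$ corresponds to $-\mathcal{L}_{J_s}A_r+\mathcal{L}_{J_r}A_s$, and its minimal quantization is $-\tfrac12\hslash^2\nabla_k(\mathcal{L}_{J_r}A_s-\mathcal{L}_{J_s}A_r)^{kj}\nabla_j$ with respect to the Levi-Civita connection of $g$. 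On the quantum side I would compute $[\widehat{E}_r,\widehat{W}_s]=-\tfrac12\hslash^2\nabla_kA_r^{kj}\nabla_j\cdot(-\tfrac12i\hslash)(\nabla_\ell J_s^\ell+J_s^\ell\nabla_\ell)-(\text{reversed})$; because $J_s$ is Killing, $\mathcal{L}_{J_s}$ commutes with $\nabla$ (the Levi-Civita connection is natural), so the commutator of the Lie-derivative-type first order operator with the second order operator $\widehat{E}_r$ produces exactly $-\tfrac12\hslash^2\nabla_k(\mathcal{L}_{J_s}A_r)^{kj}\nabla_j$ up to the factor $-i\hslash/2$ and a symmetrization, with no anomalous lower-order (ordering) terms surviving — this is precisely the ``no-anomaly'' phenomenon that makes minimal quantization a Lie-algebra homomorphism on the subalgebra generated by quadratics and Killing-linear functions. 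Carrying out this computation in Viète coordinates, using the explicit form (\ref{q8}) and the fact established just above the theorem that the zero-order part of $\widehat{W}_r$ vanishes, yields the desired cross-term identity.

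Once the three ingredients are assembled we obtain $[\widehat{\mathcal{E}}_r,\widehat{\mathcal{E}}_s]=i\hslash\,\widehat{\{\mathcal{E}_r,\mathcal{E}_s\}}$, and then the explicit right-hand side follows by feeding in the classical structure relations (\ref{str}): since $\widehat{\phantom{x}}$ is $\mathbb{R}$-linear and $\widehat{\mathcal{E}}_{r+s-(n-m+2)}$ is the minimal quantization of $\mathcal{E}_{r+s-(n-m+2)}$, the coefficients $0$, $(s-r)$, $-(s-r)$ carry over verbatim with an extra $i\hslash$. The case $[\widehat{\mathcal{E}}_1,\widehat{\mathcal{E}}_r]=0$ follows the same way from $\{\mathcal{E}_1,\mathcal{E}_r\}=0$, or more directly since $\mathcal{E}_1=E_1$ is the pure kinetic Hamiltonian $\tfrac12\mu^TG\mu$ and $J_r$ is Killing for $G$, so $[\widehat{E}_1,\widehat{W}_r]=0$ by the same Killing argument while $[\widehat{E}_1,\widehat{E}_r]=0$ by (\ref{2.6}). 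The main obstacle is the middle step: verifying that the commutator of the second-order operator $\widehat{E}_r$ with the first-order Killing operator $\widehat{W}_s$ produces no quantum ordering anomaly, i.e. that all would-be lower-order correction terms cancel thanks to $J_s$ being Killing (equivalently, thanks to $\nabla\mathcal{L}_{J_s}=\mathcal{L}_{J_s}\nabla$ and the divergence identity $\nabla_jJ_s^j=0$); everything else is bookkeeping with the structure constants (\ref{str}) and $\mathbb{R}$-linearity of minimal quantization.
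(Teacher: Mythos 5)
Your proposal is correct, and it follows the same overall decomposition as the paper's proof — expand $[\widehat{\mathcal{E}}_r,\widehat{\mathcal{E}}_s]$ into four terms, kill $[\widehat{E}_r,\widehat{E}_s]$ by (\ref{2.6}), handle $[\widehat{W}_r,\widehat{W}_s]$ as a Lie bracket of vector fields (this is exactly the paper's Lemma \ref{l2}), and reduce everything to the mixed term (the paper's Lemma \ref{l3}). Where you genuinely diverge is in how you justify the mixed term. The paper proves Lemma \ref{l3} by a brute-force computation in Vi\`ete coordinates: it uses the explicit formula (\ref{q8}) for $\widehat{E}_r$ with the $\partial_j$'s pushed to the right, the auxiliary classical identities (\ref{q10})--(\ref{q11}) obtained by differentiating the Schouten-bracket relation, and a dedicated (and rather delicate) argument for the $m\,q_n^{-1}A_r^{nj}$ terms exploiting the special values of $J_r^n$. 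You instead identify the conceptual reason up front: since $J_r$ is a Killing vector field of $g$, its divergence vanishes (so $\widehat{W}_r=-i\hslash\,\mathcal{L}_{J_r}$ on scalars, which the paper also establishes but only by direct computation of the zero-order part), and $\mathcal{L}_{J_r}$ commutes with the Levi-Civita connection $\nabla$. From these two facts $[\widehat{E}_r,\widehat{W}_s]=-i\hslash\bigl(-\tfrac12\hslash^2\nabla_k(\mathcal{L}_{J_s}A_r)^{kj}\nabla_j\bigr)$ follows without any coordinate work, and matching it against the minimal quantization of the Schouten bracket $\{E_r,W_s\}$ closes the argument. This ``Killing $\Rightarrow$ no ordering anomaly'' route is cleaner and, if developed fully, eliminates the need to track the $q_n^{-1}$ terms that occupy much of the paper's Lemma \ref{l3}; what it buys is conceptual transparency at the cost of invoking standard Riemannian identities rather than elementary bookkeeping. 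Two small blemishes in your write-up: the factor you state as $-i\hslash/2$ should be $-i\hslash$ (the two halves of the symmetrized first-order operator coincide once $\operatorname{div}J_s=0$, so no residual $1/2$), and you hedge at the end by saying you would still ``carry out this computation in Vi\`ete coordinates'' — if you trust the Killing/naturality argument, that fall-back is unnecessary, and if you don't, you have simply reproduced the paper's computational route with extra commentary. Committing fully to the coordinate-free argument would make the difference from the paper sharper.
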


Let us see why this theorem is true. Naturally%
\begin{equation}
\left\{  \mathcal{E}_{r},\mathcal{E}_{s}\right\}  =\left\{  E_{r}%
,E_{s}\right\}  +\left\{  E_{r},W_{s}\right\}  +\left\{  W_{r},E_{s}\right\}
+\left\{  W_{r},W_{s}\right\}  . \label{komklas}%
\end{equation}
(with $\left\{  E_{r},E_{s}\right\}  =0$) while%
\begin{equation}
\lbrack\widehat{\mathcal{E}}_{r},\widehat{\mathcal{E}}_{s}]=\left[
\widehat{E}_{r}+\widehat{W}_{r},\widehat{E}_{s}+\widehat{W}_{s}\right]
=\left[  \widehat{E}_{r},\widehat{E}_{s}\right]  +\left[  \widehat{E}%
_{r},\widehat{W}_{s}\right]  +\left[  \widehat{W}_{r},\widehat{E}_{s}\right]
+\left[  \widehat{W}_{r},\widehat{W}_{s}\right]  \label{kom}%
\end{equation}
and $\left[  \widehat{E}_{r},\widehat{E}_{s}\right]  =0$. In order to
calculate the remaining terms in (\ref{kom}) we we will use the link between
the canonical Poisson bracket on $\mathcal{M}=T^{\ast}Q$ and the Schouten
bracket between symmetric contravariant tensors on $Q$. For a pair of
functions on $\mathcal{M}$
\[
F_{K}=\frac{1}{k!}K^{i_{1}...i_{k}}(q)p_{i_{1}}...p_{i_{k}},\ \ \ \ F_{R}%
=\frac{1}{r!}R^{i_{1}...i_{r}}(q)p_{i_{1}}...p_{i_{r}}%
\]
where $K$ and $R$ are two symmetric tensors on $Q$, of type $(k,0)$ and
$(r,0)$ respectively, the following relation holds \cite{Dolan}
\begin{equation}
\{F_{K},F_{R}\}=-[K,R]_{S}{}^{i_{1}...i_{k+r-1}}p_{i_{1}}...p_{i_{k+r-1}}
\label{relacja}%
\end{equation}
(the minus sign in (\ref{relacja}) is due to the convention used in
(\ref{konwencja})) where
\begin{equation}
\lbrack K,R]_{S}{}^{l_{1}...l_{k+r-1}}=\frac{1}{k!r!}\left[  kK^{i(l_{1}%
...}\partial_{i}R^{...l_{k+r-1})}-rR^{i(l_{1}...}\partial_{i}K^{...l_{k+r-1}%
)}\right]  \label{S}%
\end{equation}
is a $(k+r-1,0)$-type symmetric tensor on $Q$, called the Schouten bracket of
$K$ and $R$; the symbol $(...)$ on the right hand side of (\ref{S}) denotes
symmetrization over indices. Thus
\[
\{W_{r},W_{s}\}=\left[  J_{s},J_{r}\right]  _{S}^{j}\,p_{j}=\left(  J_{s}%
^{k}(J_{r}^{j})_{q_{k}}-J_{r}^{k}(J_{s}^{j})_{q_{k}}\right)  p_{j}%
\]
(where $(J_{r}^{j})_{q_{k}}$ denotes $\frac{\partial}{\partial_{q_{k}}}%
J_{r}^{j}$) and by comparing (\ref{komklas}) and (\ref{str}) we obtain
\begin{equation}
J_{s}^{k}(J_{r}^{j})_{q_{k}}-J_{r}^{k}(J_{s}^{j})_{q_{k}}=%
\begin{cases}
0, & \text{for }r\in I_{1}^{m}\text{ and }s\in I_{2}^{m},\\
(s-r)J_{r+s-(n-m+2)}^{j}, & \text{for }r,s\in I_{1}^{m},\\
-(s-r)J_{r+s-(n-m+2)}^{j}, & \text{for }r,s\in I_{2}^{m},
\end{cases}
\label{q9}%
\end{equation}
Moreover, again due to (\ref{relacja}),
\[
\{W_{r},E_{s}\}+\{E_{r},W_{s}\}=\left(  \left[  A_{s},J_{r}\right]  _{S}%
^{kj}+\left[  J_{s},A_{r}\right]  _{S}^{kj}\right)  p_{k}p_{j},
\]
so, by comparing the appropriate terms in (\ref{komklas}) and in (\ref{str})
and due to (\ref{S}), the above formula reads
\begin{align}
&  A_{s}^{lk}(J_{r}^{j})_{q_{l}}+A_{s}^{lj}(J_{r}^{k})_{q_{l}}-J_{r}^{l}%
(A_{s}^{kj})_{q_{l}}+J_{s}^{l}(A_{r}^{kj})_{q_{l}}-A_{r}^{lk}(J_{s}%
^{j})_{q_{l}}-A_{r}^{lj}(J_{s}^{k})_{q_{l}}\nonumber\\
& \label{q10}\\
&  =%
\begin{cases}
0, & \text{for }r\in I_{1}^{m}\text{ and }s\in I_{2}^{m},\\
(s-r)A_{r+s-(n-m+2)}^{kj}, & \text{for }r,s\in I_{1}^{m},\\
-(s-r)A_{r+s-(n-m+2)}^{kj}, & \text{for }r,s\in I_{2}^{m}.
\end{cases}
\nonumber
\end{align}
Further, differentiating (\ref{q10}) with respect to $q_{k}$ and summation
over $k$ yields
\begin{align}
&  (A_{s}^{lk})_{q_{k}}(J_{r}^{j})_{q_{l}}-(A_{r}^{lk})_{q_{k}}(J_{s}%
^{j})_{q_{l}}-J_{r}^{l}(A_{s}^{kj})_{q_{l}q_{k}}+J_{s}^{l}(A_{r}^{kj}%
)_{q_{l}q_{k}}\nonumber\\
& \label{q11}\\
&  =%
\begin{cases}
0, & \text{for }r\in I_{1}^{m}\text{ and }s\in I_{2}^{m},\\
(s-r)(A_{r+s-(n-m+2)}^{kj})_{q_{k}}, & \text{for }r,s\in I_{1}^{m},\\
-(s-r)(A_{r+s-(n-m+2)}^{kj})_{q_{k}}, & \text{for }r,s\in I_{2}^{m},
\end{cases}
\nonumber
\end{align}
due to fact that all components of all $J_{r}$ are linear in $q$. Using the
above classical relations we can now calculate the remaining terms in
(\ref{kom}).

\begin{lemma}
\label{l2}The commutator of differential operators $\widehat{W}_{r}$ and
$\widehat{W}_{s}$ is
\begin{equation}
\lbrack\widehat{W}_{r},\widehat{W}_{s}]=%
\begin{cases}
0, & \text{for }r\in I_{1}^{m}\text{ and }s\in I_{2}^{m},\\
i\hslash(s-r)\widehat{W}_{r+s-(n-m+2)}, & \text{for }r,s\in I_{1}^{m},\\
-i\hslash(s-r)\widehat{W}_{r+s-(n-m+2)}, & \text{for }r,s\in I_{2}^{m}.
\end{cases}
\label{q12}%
\end{equation}
(so that $[\widehat{W}_{r},\widehat{W}_{s}]=i\hslash\widehat{\{W_{r},W_{s}\}}$).
\end{lemma}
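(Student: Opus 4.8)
The plan is to reduce the commutator of the first-order operators $\widehat{W}_r=-i\hslash J_r^{j}\partial_j$ (which, as established just above in \eqref{q8}, have vanishing zero-order part in Vi\`ete coordinates) to a purely classical computation of the Schouten bracket of the Killing vector fields $J_r$, $J_s$. First I would write out the commutator of two such operators directly: since $\widehat{W}_r$ is a derivation,
\[
[\widehat{W}_r,\widehat{W}_s]=(-i\hslash)^2\big[J_r^{k}\partial_k,J_s^{j}\partial_j\big]
=-\hslash^2\Big(J_r^{k}(J_s^{j})_{q_k}-J_s^{k}(J_r^{j})_{q_k}\Big)\partial_j .
\]
Comparing this with the definition of $\widehat{W}_{r+s-(n-m+2)}=-i\hslash J_{r+s-(n-m+2)}^{j}\partial_j$, the desired identity \eqref{q12} is equivalent to the classical relation \eqref{q9} for the components $J^{k}_r(J^{j}_s)_{q_k}-J^{k}_s(J^{j}_r)_{q_k}$ (up to the overall sign coming from $[J_s,J_r]_S$ versus $[J_r,J_s]_S$, which is exactly the convention already fixed in \eqref{relacja}). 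So the body of the proof is essentially one line once \eqref{q9} is in hand, and \eqref{q9} was obtained earlier by matching \eqref{komklas} with the known structure relations \eqref{str}.

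The one point that genuinely needs checking is that no anomalous term appears, i.e.\ that the naive bracket of the first-order \emph{differential operators} really does reproduce the bracket of the \emph{symbols} with no $\hslash^2$-order correction. This is guaranteed precisely because the zero-order pieces of $\widehat{W}_r$ vanish in the $(q,p)$ coordinates — a fact proved in the paragraph preceding \eqref{q8} using $|g|=\varepsilon q_n^{-m}$ together with the explicit forms \eqref{Zd}, \eqref{Zg} of $J_r$ and the divergence computation $\partial_j J_r^{j}=0$. I would therefore explicitly invoke that the operators are the bare first-order transport operators $\widehat{W}_r=-i\hslash J_r$, so that their commutator is $-i\hslash$ times the quantization of the vector field $[J_s,J_r]$ (Lie bracket), whose components are read off from \eqref{q9}.

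Finally I would close by translating the three cases. For $r\in I_1^m,\ s\in I_2^m$ the right-hand side of \eqref{q9} is zero, hence $[\widehat{W}_r,\widehat{W}_s]=0$; for $r,s\in I_1^m$ we get $-\hslash^2(s-r)J_{r+s-(n-m+2)}^{j}\partial_j=i\hslash(s-r)\widehat{W}_{r+s-(n-m+2)}$; and for $r,s\in I_2^m$ the sign flips, giving $-i\hslash(s-r)\widehat{W}_{r+s-(n-m+2)}$. The convention $\mathcal{E}_i=0$ (equivalently $\widehat{W}_i=0$) for $i\le 0$ or $i>n$, already in force via $q_r=0$ outside $0,\dots,n$, handles the boundary index values automatically. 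The parenthetical remark $[\widehat{W}_r,\widehat{W}_s]=i\hslash\widehat{\{W_r,W_s\}}$ then follows since the minimal quantization of the linear-in-momenta function $\{W_r,W_s\}=[J_s,J_r]_S^{j}p_j$ is, again by vanishing of its zero-order part in Vi\`ete coordinates, just $-i\hslash[J_s,J_r]_S^{j}\partial_j$. I do not expect a serious obstacle here; the only care needed is bookkeeping of signs (the $[K,R]_S$ ordering in \eqref{relacja} and the Poisson-bracket sign convention \eqref{konwencja}) and confirming once more that the zero-order corrections are absent — which the excerpt has already done.
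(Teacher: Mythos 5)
Your argument is the same as the paper's: compute $[\widehat{W}_r,\widehat{W}_s]=-\hslash^2\bigl(J_r^k(J_s^j)_{q_k}-J_s^k(J_r^j)_{q_k}\bigr)\partial_j$ using $\widehat{W}_r=-i\hslash J_r$ (valid because the zero-order parts vanish, as shown just before \eqref{q8}), then insert the classical relation \eqref{q9} and absorb a factor $-i\hslash$ into $\widehat{W}_{r+s-(n-m+2)}$. The only flaw is a sign slip in your closing paragraph: for $r,s\in I_1^m$ your own first display combined with \eqref{q9} gives $+\hslash^2(s-r)J_{r+s-(n-m+2)}^j\partial_j$ (equivalently, the commutator is $-i\hslash$ times the quantization of $[J_r,J_s]$, not of $[J_s,J_r]$), and it is this, not the $-\hslash^2(s-r)J_{r+s-(n-m+2)}^j\partial_j$ you wrote, that equals your stated final answer $i\hslash(s-r)\widehat{W}_{r+s-(n-m+2)}$.
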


The proof is by direct computation:
\begin{align*}
\lbrack\widehat{W}_{r},\widehat{W}_{s}]  &  =-\hslash^{2}[J_{r},J_{s}%
]=-\hslash^{2}\left(  J_{r}^{k}(J_{s}^{j})_{q_{k}}-J_{s}^{k}(J_{r}^{j}%
)_{q_{k}}\right)  \partial_{j}\\
&  \overset{(\ref{q9})}{=}%
\begin{cases}
0, & \text{for }r\in I_{1}^{m}\text{ and }s\in I_{2}^{m},\\
\hslash^{2}(s-r)J_{r+s-(n-m+2)}^{j}, & \text{for }r,s\in I_{1}^{m},\\
-\hslash^{2}(s-r)J_{r+s-(n-m+2)}^{j}, & \text{for }r,s\in I_{2}^{m},
\end{cases}
\end{align*}
which yields immediately (\ref{q12}) as $\widehat{W}_{r}=-i\hslash J_{r}$.

\begin{lemma}
\label{l3}The mixed term in (\ref{kom}) is given by%
\begin{equation}
\lbrack\widehat{W}_{r},\widehat{E}_{s}]+[\widehat{E}_{r},\widehat{W}_{s}]=%
\begin{cases}
0, & \text{for }r\in I_{1}^{m}\text{ and }s\in I_{2}^{m},\\
i\hslash(s-r)\widehat{E}_{r+s-(n-m+2)}, & \text{for }r,s\in I_{1}^{m},\\
-i\hslash(s-r)\widehat{E}_{r+s-(n-m+2)}, & \text{for }r,s\in I_{2}^{m}.
\end{cases}
\label{q13}%
\end{equation}

\end{lemma}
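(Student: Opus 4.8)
The plan is to recognise that Lemma~\ref{l3} is equivalent to the single intertwining relation
\[
[\widehat{W}_{r},\widehat{E}_{s}]+[\widehat{E}_{r},\widehat{W}_{s}]=i\hslash\,\widehat{\{W_{r},E_{s}\}+\{E_{r},W_{s}\}},
\]
and to prove that relation. Once it is known, (\ref{q13}) is automatic: by the discussion around (\ref{relacja})--(\ref{S}) the classical bracket $\{W_{r},E_{s}\}+\{E_{r},W_{s}\}$ is the quadratic-in-momenta function $([A_{s},J_{r}]_{S}+[J_{s},A_{r}]_{S})^{kj}p_{k}p_{j}$, whose coefficient tensor $2([A_{s},J_{r}]_{S}+[J_{s},A_{r}]_{S})^{kj}$ is precisely the left-hand side of (\ref{q10}); so by (\ref{q10}) it equals $0$, $(s-r)A_{r+s-(n-m+2)}^{kj}$ or $-(s-r)A_{r+s-(n-m+2)}^{kj}$ in the three cases, and since minimal quantization is $\mathbb{R}$-linear the right-hand side of the displayed identity becomes $0$, $i\hslash(s-r)\widehat{E}_{r+s-(n-m+2)}$ or $-i\hslash(s-r)\widehat{E}_{r+s-(n-m+2)}$, which is (\ref{q13}).

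To prove the intertwining relation I would compute in Vi\`{e}te coordinates, expanding both sides order by order in $\partial_{j}$, using the shapes (\ref{q8}): $\widehat{W}_{r}=-i\hslash J_{r}^{l}\partial_{l}$ and $\widehat{E}_{s}=-\tfrac12\hslash^{2}(A_{s}^{kj}\partial_{k}\partial_{j}+B_{s}^{j}\partial_{j})$ with $B_{s}^{j}=(A_{s}^{kj})_{q_{k}}-\tfrac12 mq_{n}^{-1}A_{s}^{nj}$. The decisive simplification is that each $J_{r}^{l}$ is \emph{linear} in $q$, so all second partials of $J_{r}^{l}$ vanish; hence $[J_{r}^{l}\partial_{l},A_{s}^{kj}\partial_{k}\partial_{j}]=(\mathcal{L}_{J_{r}}A_{s})^{kj}\partial_{k}\partial_{j}$ with \emph{no first-order tail} ($\mathcal{L}_{J_{r}}A_{s}$ being the Lie derivative of the symmetric $(2,0)$-tensor $A_{s}$ along $J_{r}$), while $[J_{r}^{l}\partial_{l},B_{s}^{j}\partial_{j}]=(J_{r}^{l}(B_{s}^{j})_{q_{l}}-B_{s}^{l}(J_{r}^{j})_{q_{l}})\partial_{j}$. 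Adding $[\widehat{W}_{r},\widehat{E}_{s}]$ to $[\widehat{E}_{r},\widehat{W}_{s}]=-[\widehat{W}_{s},\widehat{E}_{r}]$ and comparing with the right-hand side of (\ref{q13}), the relation reduces to two identities of tensor fields on $Q$. At second order one needs $\mathcal{L}_{J_{s}}A_{r}-\mathcal{L}_{J_{r}}A_{s}=\pm(s-r)A_{r+s-(n-m+2)}$ (or $0$), which is exactly (\ref{q10}), since the left-hand side of (\ref{q10}) is nothing but $(\mathcal{L}_{J_{s}}A_{r}-\mathcal{L}_{J_{r}}A_{s})^{kj}$. At first order one needs that $J_{r}^{l}(B_{s}^{j})_{q_{l}}-B_{s}^{l}(J_{r}^{j})_{q_{l}}$ antisymmetrised in $(r,s)$ is the corresponding signed multiple of $B_{r+s-(n-m+2)}^{j}$; splitting $B_{s}^{j}$ into its $(A_{s}^{kj})_{q_{k}}$-part and its $-\tfrac12 mq_{n}^{-1}A_{s}^{nj}$-part, the first part reproduces (up to sign) the left-hand side of (\ref{q11}) (which was derived from (\ref{q10}) for exactly this step), and the second part, absent when $m=0$, is handled for $m\ge1$ by the fact that $J_{r}^{n}$ vanishes for every $r$ except $r=n+2-m$, where $J_{r}^{n}=(m-1)q_{n}$: the extra term obtained by differentiating $q_{n}^{-1}$ then cancels the term $A_{s}^{lj}(J_{r}^{n})_{q_{l}}$, leaving $J_{r}^{l}\partial_{l}(-\tfrac12 mq_{n}^{-1}A_{s}^{nj})+\tfrac12 mq_{n}^{-1}A_{s}^{nl}(J_{r}^{j})_{q_{l}}=-\tfrac12 mq_{n}^{-1}(\mathcal{L}_{J_{r}}A_{s})^{nj}$, after which (\ref{q10}) finishes the first-order identity.

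Conceptually, the reason this works with no quantum correction is that, $J_{r}$ being a Killing field of $g$, the operator $\widehat{W}_{r}=-i\hslash J_{r}^{j}\nabla_{j}$ (the zeroth-order piece vanishing for exactly this reason, as checked in the text) is $-i\hslash$ times the generator of the one-parameter group of isometries $\exp(tJ_{r})$ of $g$; since minimal quantization (Definition~\ref{minq}) is built purely out of $g$, conjugating $\widehat{E}_{s}$ by this group yields the minimal quantization of the transported geodesic Hamiltonian, and differentiation at $t=0$ gives the per-pair identity $[\widehat{W}_{r},\widehat{E}_{s}]=i\hslash\,\widehat{\{W_{r},E_{s}\}}$, which antisymmetrised and combined with (\ref{q10}) is again (\ref{q13}). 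I expect the only real obstacle to be the first-order bookkeeping in the direct computation --- in particular verifying that the non-tensorial piece $(A_{s}^{kj})_{q_{k}}$ and the density-factor piece $-\tfrac12 mq_{n}^{-1}A_{s}^{nj}$ of $B_{s}^{j}$ recombine into the first-order coefficient of $\widehat{E}_{r+s-(n-m+2)}$; the second-order match is immediate from (\ref{q10}) and the required first-order input is precisely its contracted form (\ref{q11}).
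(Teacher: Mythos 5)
Your direct-calculation plan is correct and is essentially the paper's own proof: you work in Vi\`{e}te coordinates with the shapes (\ref{q8}), split the commutator into the second-order piece (matched to (\ref{q10})), the $(A^{kj})_{q_k}$ first-order piece (matched to (\ref{q11})), and the density-factor piece $-\tfrac12 m q_{n}^{-1}A^{nj}$, which you handle via the observation that $J_{r}^{n}$ vanishes for $r\neq n+2-m$ and equals $(m-1)q_{n}$ for $r=n+2-m$ --- exactly the formal rewrite $\tfrac{1}{q_{n}}J_{r}^{n}A_{s}^{nj}=(J_{r}^{n})_{q_{l}}A_{s}^{lj}$ the paper uses. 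Your bookkeeping, including the factor $2$ relating $[A_{s},J_{r}]_{S}+[J_{s},A_{r}]_{S}$ to the left-hand side of (\ref{q10}), is right. The ``conceptual'' addendum is a genuinely different route that the paper does \emph{not} use: since $J_{r}$ is a Killing field of $g$, the operator $\widehat{W}_{r}=-i\hslash J_{r}^{j}\nabla_{j}$ generates the unitary pullbacks of the isometry flow of $J_{r}$ on $L^{2}(Q,|g|^{1/2}dq)$, and because minimal quantization is built intrinsically from $g$, conjugation by these unitaries intertwines minimal quantization with the classical cotangent-lift flow; differentiating at $t=0$ gives the \emph{per-pair} identity $[\widehat{W}_{r},\widehat{E}_{s}]=i\hslash\widehat{\{W_{r},E_{s}\}}$ with no coordinate computation, after which only (\ref{q10}) --- not (\ref{q11}) --- is needed to identify the antisymmetrized right-hand side as $\pm(s-r)\widehat{E}_{r+s-(n-m+2)}$. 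The two routes prove the same lemma, but the isometry argument is cleaner and explains why there is no ordering anomaly (the same Killing property that kills the zeroth-order term of $\widehat{W}_{r}$), at the cost of invoking covariance of the quantization map under isometries rather than verifying it by hand.
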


\begin{proof}
Assume $r,s\in I_{1}^{m}$ (for $r,s\in I_{2}^{m}$ the proof is analogous).
Using (\ref{q8}) we obtain
\begin{align}
\lbrack\widehat{W}_{r},\widehat{E}_{s}]+[\widehat{E}_{r},\widehat{W}_{s}]  &
=\frac{1}{2}i\hslash^{3}\{[J_{r}^{l}\partial_{l},A_{s}^{kj}\partial
_{k}\partial_{j}]+[A_{r}^{kj}\partial_{k}\partial_{j},J_{s}^{l}\partial
_{l}]\nonumber\\
&  +[J_{r}^{l}\partial_{l},(A_{s}^{kj})_{q_{k}}\partial_{j}]+[(A_{r}%
^{kj})_{q_{k}}\partial_{j},J_{s}^{l}\partial_{l}]\label{zbieraj}\\
&  -\frac{1}{2}m([J_{r}^{l}\partial_{l},q_{n}^{-1}A_{s}^{nj}\partial
_{j}]+[q_{n}^{-1}A_{r}^{nj}\partial_{j},J_{s}^{l}\partial_{l}])\}\nonumber
\end{align}
Since $(J_{r}^{l})_{q_{j}q_{k}}=0$, the first two terms above read
\begin{align*}
&  [J_{r}^{l}\partial_{l},A_{s}^{kj}\partial_{k}\partial_{j}]+[A_{r}%
^{kj}\partial_{k}\partial_{j},J_{s}^{l}\partial_{l}]\overset{(\ref{S}%
)}{=}([J_{r},A_{s}]_{S}^{kj}+[A_{r},J_{s}]_{S}^{kj})\partial_{k}\partial_{j}\\
&  \overset{(\ref{q10})}{=}-(s-r)A_{r+s-(n-m+2)}^{kj}\partial_{k}\partial_{j}.
\end{align*}
while the next two terms become%
\begin{align*}
&  [J_{r}^{l}\partial_{l},(A_{s}^{kj})_{q_{k}}\partial_{j}]+[(A_{r}%
^{kj})_{q_{k}}\partial_{j},J_{s}^{l}\partial_{l}]\\
&  =\left(  J_{r}^{l}(A_{s}^{kj})_{q_{k}q_{l}}-(A_{s}^{kl})_{q_{k}}(J_{r}%
^{j})_{q_{l}}+(A_{r}^{kl})_{q_{k}}(J_{s}^{j})_{q_{l}}-J_{s}^{l}(A_{r}%
^{kj})_{q_{k}q_{l}}\right)  \partial_{j}\\
&  \overset{(\ref{q11})}{=}-(s-r)(A_{r+s-(n-m+2)}^{kj})_{q_{k}}\partial_{j}.
\end{align*}
The last two terms in (\ref{zbieraj}) are%
\begin{align*}
-\frac{1}{2}m\left(  [J_{r}^{l}\partial_{l},q_{n}^{-1}A_{s}^{nj}\partial
_{j}]+[q_{n}^{-1}A_{r}^{nj}\partial_{j},J_{s}^{l}\partial_{l}]\right)   &
=-\frac{1}{2}mq_{n}^{-1}([J_{r}^{l}\partial_{l},A_{s}^{nj}\partial_{j}%
]+[A_{r}^{nj}\partial_{j},J_{s}^{l}\partial_{l}])\\
&  +\frac{1}{2}mq_{n}^{-2}(J_{r}^{n}A_{s}^{nj}-J_{s}^{n}A_{r}^{nj}%
)\partial_{j}%
\end{align*}
and as for $m>0$%
\[
J_{r}^{n}=0,\ \ \ \ r\neq n-m+2\text{ \ and \ }J_{n-m+2}^{n}=(m-1)q_{n}%
\]
we can formally write
\[
\frac{1}{q_{n}}J_{r}^{n}A_{s}^{nj}=(J_{r}^{n})_{q_{n}}A_{s}^{nj}=(J_{r}%
^{n})_{q_{l}}A_{s}^{lj}%
\]
and thus
\begin{align*}
&  -\frac{1}{2}m\left(  [J_{r}^{l}\partial_{l},q_{n}^{-1}A_{s}^{nj}%
\partial_{j}]+[q_{n}^{-1}A_{r}^{nj}\partial_{j},J_{s}^{l}\partial_{l}]\right)
\\
&  =-\frac{1}{2}mq_{n}^{-1}\left(  J_{r}^{l}(A_{s}^{nj})_{q_{l}}-A_{s}%
^{nl}(J_{r}^{j})_{q_{l}}+A_{r}^{nl}(J_{s}^{j})_{q_{l}}-J_{s}^{l}(A_{r}%
^{nj})_{q_{l}}-A_{s}^{lj}(J_{r}^{n})_{q_{l}}+A_{r}^{lj}(J_{s}^{n})_{q_{l}%
}\right)  \partial_{j}\\
&  \overset{(\ref{q10})}{=}\frac{1}{2}m(s-r)q_{n}^{-1}A_{r+s-(n-m+2)}%
^{nj}\partial_{j}.
\end{align*}
Gathering all the terms in (\ref{zbieraj}) as calculated above and comparing
the result with (\ref{q8}) we receive the relation (\ref{q13}). This concludes
the proof of Lemma \ref{l3}.
\end{proof}

Lemmas \ref{l2} and \ref{l3} immediately imply the thesis of Theorem
\ref{algebra} which in turn implies that the relation (\ref{wk1}) is true and
therefore Theorem \ref{MAIN} is proved.

The first example (Example \ref{PRZ1} below) is rather detailed, to illustrate
various aspects of the theory presented above.

\begin{example}
\label{PRZ1}(Non-autonomous quantum H\'{e}non-Heiles system. Continuation of
Example 3 from Part I) Choose $n=2$ and $m=1$. Then $I_{1}^{m}=\left\{
2\right\}  $ while $I_{2}^{m}=\emptyset$ and the only Killing vector field
(\ref{Zd}) is $J_{2}=\partial_{1}=\frac{\partial}{\partial q_{1}}$ (and there
are no fields (\ref{Zg})). The curve (\ref{sA}) becomes%
\[
\sum_{\alpha=-1}^{5}c_{\alpha}(t)x^{\alpha}+h_{1}x+h_{2}=\frac{1}{2}%
xy^{2}\quad
\]
and solving the corresponding separation relations yields the Hamiltonians
$h_{r}$ that in Vi\`{e}te coordinates $(q,p)$ attain the form%
\[
h_{r}=\frac{1}{2}\,p^{T}A_{r}{p}+\sum_{\alpha=-1}^{5}c_{\alpha}(t)V_{r}%
^{(\alpha)}\text{, \ }r=1,2
\]
with $A_{r}=K_{r}G$ of the form
\begin{equation}
A_{1}=\left(
\begin{array}
[c]{cc}%
1 & 0\\
0 & -q_{2}%
\end{array}
\right)  \text{, \ \ }A_{2}=\left(
\begin{array}
[c]{cc}%
0 & -q_{2}\\
-q_{2} & -q_{1}q_{2}%
\end{array}
\right)  \label{Arp1}%
\end{equation}
and where $V^{(0)}=(0,-1)^{T}$ and $V^{(1)}=(-1,0)$ are trivial potentials and
where%
\begin{align*}
V^{(-1)}  &  =(q_{2}^{-1},q_{1}q_{2}^{-1})^{T}\text{, \ }V^{(2)}=(q_{1}%
,q_{2})^{T}\text{, \ }V^{(3)}=(q_{2}-q_{1}^{2},-q_{1}q_{2})^{T}\text{,
\ \ }V^{(4)}=(q_{1}^{3}-2q_{1}q_{2},q_{1}^{2}q_{2}-q_{2}^{2})^{T}\\
V^{(5)}  &  =(-q_{1}^{4}+3q_{1}^{2}q_{2}-q_{2}^{2},-q_{1}^{3}q_{2}+2q_{1}%
q_{2}^{2})^{T}%
\end{align*}
Deforming $h_{1}$ and $h_{2}$ respectively by the functions $W_{1}=0$ and
$W_{2}=p_{1}$ yields the Hamiltonians $h_{1}^{A}=h_{1}$ and $h_{2}^{A}%
=h_{2}+p_{1}$. Further, due to the fact that $\kappa_{1}=2=n$ we have that
$H_{r}^{A}=h_{r}^{A}$ for $r=1,2$ (see (\ref{kappy}) and (\ref{7b})). We can
now determine the functions $c_{\alpha}(t)$ by demanding that $H_{r}^{A}$
satisfy the Frobenius condition (\ref{fcg}) with $f_{12}\equiv0$. That leads
to an overdetermined, but solvable, system of PDE's for $c_{\alpha}$ (note
that the functions $c_{\alpha}$ do not depend on the number $r$ of the
Hamiltonian) with a particular solution
\[
c_{-1}=-\frac{1}{4}a\text{ arbitrary, \ }c_{0}=\frac{1}{2}t_{1}^{2}%
+3t_{1}t_{2}^{2},\text{ \ }c_{1}=0\text{, \ }c_{2}=-(t_{1}+3t_{2}^{2})\text{,
}c_{3}=-3t_{2}\text{, }c_{4}=-1\text{, }c_{5}=0.
\]
Another particular solution, which differs from the one above only in the
non-dynamical (in the sense that they do not influence the Hamiltonian flows
of $h_{r}^{A}$) functions $c_{0}$ and $c_{1}$ is%
\[
c_{-1}=-\frac{1}{4}a\text{ arbitrary, \ }c_{0}=\frac{1}{2}t_{1}^{2},\text{
\ }c_{1}=-t_{2}^{3}\text{, \ }c_{2}=-(t_{1}+3t_{2}^{2})\text{, }c_{3}%
=-3t_{2}\text{, }c_{4}=-1\text{, }c_{5}=0.
\]
In consequence, the sought Hamiltonians $H_{r}^{A}$ become
\begin{align*}
H_{1}^{A}  &  =\frac{1}{2}\,{p}_{1}^{2}-\frac{1}{2}\,q_{{2}}{p}_{2}^{2}%
-q_{1}^{3}+2q_{1}q_{2}+3t_{2}(q_{1}^{2}-q_{2})-(t_{1}+3t_{2}^{2})q_{1}%
-\frac{1}{4}a\,q_{2}^{-1}-c_{1},\\
H_{2}^{A}  &  =-q_{{2}}p_{{2}}p_{{1}}-\frac{1}{2}\,q_{{1}}q_{{2}}{p}_{2}%
^{2}+p_{1}+q_{2}^{2}-q_{1}^{2}q_{2}+3t_{2}q_{1}q_{2}-(t_{1}+3t_{2}^{2}%
)q_{2}-\frac{1}{4}aq_{1}q_{2}^{-1}-c_{0},
\end{align*}
and one can verify, by direct computation, that they satisfy the Frobenius
condition (\ref{fcg}) with $f_{rs}=0$. In the flat orthogonal coordinates
$(x_{1},x_{2},y_{1},y_{2})$ \cite{blasz2007} (the reader should not confuse
these coordinates with the variables $x,y$ used in the considered algebraic
curves)
\[
q_{1}=-x_{1},\quad q_{2}=-\frac{1}{4}x_{2}^{2},\quad p_{1}=-y_{1},\quad
p_{2}=-\frac{2y_{2}}{x_{2}},
\]
the Hamiltonians $H_{i}^{A}$ take the form
\begin{align*}
H_{1}^{A}  &  =\frac{1}{2}y_{1}^{2}+\frac{1}{2}y_{2}^{2}+x_{1}^{3}+\frac{1}%
{2}x_{1}x_{2}^{2}+a\,x_{2}^{-2}+3t_{2}(x_{1}^{2}+\frac{1}{4}x_{2}^{2}%
)+(t_{1}+3t_{2}^{2})x_{1}-c_{1}\\
&  \equiv h_{1}^{HH}+3t_{2}(x_{1}^{2}+\frac{1}{4}x_{2}^{2})+(t_{1}+3t_{2}%
^{2})x_{1}-c_{1},\\
H_{2}^{A}  &  =\frac{1}{2}x_{2}y_{1}y_{2}-\frac{1}{2}x_{1}y_{2}^{2}%
-y_{1}+\frac{1}{16}x_{2}^{4}+\frac{1}{4}x_{1}^{2}x_{2}^{2}-a\,x_{1}x_{2}%
^{-2}+\frac{1}{4}3t_{2}x_{1}x_{2}^{2}+\frac{1}{4}(t_{1}+3t_{2}^{2})x_{2}%
^{2}-c_{0}\\
&  \equiv h_{2}^{HH}+\frac{1}{4}3t_{2}x_{1}x_{2}^{2}+\frac{1}{4}(t_{1}%
+3t_{2}^{2})x_{2}^{2}-c_{0}%
\end{align*}
(so that $G=I$ and the coordinates are actually Euclidean) and constitute a
non-autonomous deformation of the integrable case of the extended
H\'{e}non-Heiles system $h_{r}^{HH}$. Moreover, the flow generated by
$h_{1}^{HH}$ is exactly the stationary flow of the $5th$-order KdV
\cite{Allan}. Let us now perform the minimal quantization of the obtained
$H_{i}^{A}$ in the flat coordinates. The $(2,0)$-tensors $A_{r}$ (\ref{Arp1})
have the form
\[
A_{1}=\operatorname{Id}\text{, }A_{2}=\left(
\begin{array}
[c]{cc}%
0 & \frac{1}{2}x_{2}\\
\frac{1}{2}x_{2} & -x_{1}%
\end{array}
\right)
\]
and the minimal quantization $\widehat{H}_{r}$ of $H_{r}$ can be calculated
using (\ref{q2}) and (\ref{q3}). The result is%
\begin{align*}
\widehat{H}_{1}^{A}  &  =-\frac{1}{2}\hslash^{2}\left(  \partial_{1}%
^{2}+\partial_{2}^{2}\right)  +x_{1}^{3}+\frac{1}{2}x_{1}x_{2}^{2}%
+\alpha\,x_{2}^{-2}+3t_{2}(x_{1}^{2}+\frac{1}{4}x_{2}^{2})+(t_{1}+3t_{2}%
^{2})x_{1}-c_{1},\\
\widehat{H}_{2}^{A}  &  =-\frac{1}{2}\hslash^{2}\left(  x_{2}\partial
_{1}\partial_{2}-x_{1}\partial_{2}^{2}+\frac{1}{2}\partial_{1}\right)
+i\hslash\partial_{1}+\frac{1}{16}x_{2}^{4}+\frac{1}{4}x_{1}^{2}x_{2}%
^{2}-\alpha\,x_{1}x_{2}^{-2}+\frac{1}{4}3t_{2}x_{1}x_{2}^{2}+\frac{1}{4}%
(t_{1}+3t_{2}^{2})x_{2}^{2}-c_{0},
\end{align*}
where $\partial_{i}=\frac{\partial}{\partial x_{i}}$ now, and one can show by
a direct computation that $\widehat{H}_{r}^{A}$ do indeed satisfy the quantum
Frobenius condition (\ref{fcq}).
\end{example}

\begin{example}
(Example 2 from Part I continued) Let us now choose $n=3$ and $m=1$. This time
we will perform calculations in Vi\`{e}te coordinates $(p,q)$. We have
$I_{1}^{m}=\left\{  2,3\right\}  $ and $I_{2}^{m}=\emptyset$, while the vector
fields $J_{r}$ are
\[
\text{\ }J_{2}=\partial_{2}\text{, \ }J_{3}=2\partial_{1}+q_{1}\partial_{2}%
\]
so that $W_{1}=0$, $W_{2}=p_{2}$ while $W_{3}=2p_{1}+q_{1}p_{2}$. The curve
(\ref{sA}) becomes%
\[
\sum_{\alpha=-1}^{7}c_{\alpha}(t)x^{\alpha}+h_{1}x^{2}+h_{2}x+h_{3}=\frac
{1}{2}xy^{2}\quad
\]
and the Hamiltonians $h_{r}$ in become%
\[
h_{r}=\frac{1}{2}\mu^{T}A_{r}\mu+\sum_{\alpha=-1}^{7}c_{\alpha}(t)V_{r}%
^{(\alpha)},\quad r=1,\dotsc,3,
\]
with $A_{r}$ in Vi\`{e}te coordinates given by (see (\ref{n1}) and
(\ref{n2})):%
\[
A_{1}=G=\left(
\begin{array}
[c]{ccc}%
0 & 1 & 0\\
1 & q_{1} & 0\\
0 & 0 & -q_{3}%
\end{array}
\right)  \text{, \ }A_{2}=\left(
\begin{array}
[c]{ccc}%
1 & q_{1} & 0\\
q_{1} & q_{1}^{2}-q_{2} & -q_{3}\\
0 & -q_{3} & -q_{1}q_{3}%
\end{array}
\right)  \text{, \ \ }A_{3}=\left(
\begin{array}
[c]{ccc}%
0 & 0 & -q_{3}\\
0 & -q_{3} & -q_{1}q_{3}\\
-q_{3} & -q_{1}q_{3} & -q_{2}q_{3}%
\end{array}
\right)
\]
and where the scalar potentials $V_{r}^{(\alpha)}$ are polynomials in $q$ that
can be calculated from the recursion formula in Part I. Further, $h_{r}%
^{A}=h_{r}+W_{r}$ and, since $\kappa_{1}=2<n$ and due to (\ref{7b}) and to
results in Part I
\begin{equation}
H_{1}^{A}=h_{1}^{A}\text{, \ }H_{2}^{A}=h_{2}^{A}\text{, \ }H_{3}^{A}%
=h_{3}^{A}+t_{2}h_{1}^{A}. \label{hp2}%
\end{equation}
The functions $c_{\alpha}$ can now be determined from the Frobenius condition
(\ref{fcg}) (with $f_{rs}=0$). It leads to an overdetermined but soluble
system of PDE's. A particular solution of this system is%
\begin{align}
c_{-1}  &  =c_{0}=0\text{, \ }c_{1}=a(t_{3}^{3}-2t_{2}t_{3}-4t_{1}%
)t_{3}\text{, \ }c_{2}=4a(t_{3}^{3}-t_{1})\text{, }\label{hp2stale}\\
c_{3}  &  =2a(3t_{3}^{2}+t_{2})\text{, }c_{4}=4at_{3}\text{, }c_{5}%
=c_{6}=c_{7}=0\text{, \ }a\in\mathbb{R}\text{.}\nonumber
\end{align}
Let us now perform the minimal quantization of the Hamiltonians $H_{r}$
(\ref{hp2}) with $c_{\alpha}$ given by (\ref{hp2stale}). Of course%
\begin{equation}
\widehat{H}_{1}^{A}=\widehat{h}_{1}^{A}\text{, \ }\widehat{H}_{2}%
^{A}=\widehat{h}_{2}^{A}\text{, \ }\widehat{H}_{3}^{A}=\widehat{h}_{3}%
^{A}+t_{2}\widehat{h}_{1}^{A} \label{hp2q}%
\end{equation}
where $\widehat{h}_{i}^{A}$ can be computed explicitly from (\ref{q8}). The
final result is%
\[
\widehat{h}_{r}^{A}=\widehat{E}_{r}-i\hslash J_{r}+a(t_{3}^{3}-2t_{2}%
t_{3}-4t_{1})t_{3}V_{r}^{(1)}+4a(t_{3}^{3}-t_{1})V_{r}^{(2)}+2a(3t_{3}%
^{2}+t_{2})V_{r}^{(3)}+4at_{3}V_{r}^{(4)}\text{, }r=1,2,3
\]
where%
\begin{align*}
\widehat{E}_{1}  &  =-\frac{1}{2}\hslash^{2}\left(  2\partial_{1}\partial
_{2}+q_{1}\partial_{2}^{2}-q_{3}\partial_{3}^{2}-\frac{1}{2}\partial
_{3}\right) \\
\widehat{E}_{2}  &  =-\frac{1}{2}\hslash^{2}\left(  \partial_{1}^{2}%
+2q_{1}\partial_{1}\partial_{2}+(q_{1}^{2}-q_{2})\partial_{2}^{2}%
-2q_{3}\partial_{2}\partial_{3}-q_{1}q_{3}\partial_{3}^{2}-\frac{1}{2}%
\partial_{2}-\frac{1}{2}q_{1}\partial_{3}\right) \\
\widehat{E}_{3}  &  =-\frac{1}{2}\hslash^{2}\left(  -2q_{3}\partial
_{1}\partial_{3}-q_{3}\partial_{2}^{2}-2q_{1}q_{3}\partial_{2}\partial
_{3}-q_{2}q_{3}\partial_{3}^{2}-\frac{1}{2}\partial_{1}-\frac{1}{2}%
q_{1}\partial_{2}-\frac{1}{2}q_{2}\partial_{3}\right)
\end{align*}
and the resulting quantum operators $\widehat{H}_{i}^{A}$ in (\ref{hp2q}) do
satisfy the quantum Frobenius condition (\ref{fcq}).
\end{example}

\subsection{Quantization of systems with magnetic terms}

We proceed now with the case of magnetic potentials. According to Definition
\ref{minq}, the minimal quantization $\widehat{H}_{r}^{B}$\ of Hamiltonians
$H_{r}^{B}$ is in the magnetic case obtained by replacing the Hamiltonians
$h_{r}^{A}$ in (\ref{7b}), (\ref{7c}) (with $h_{r}$ given now by the magnetic
Hamiltonians (\ref{mag1})) by the self-adjoint operators
\begin{equation}
\widehat{h}_{r}^{B}=\widehat{E}_{r}+\widehat{W}_{r}+\sum_{\gamma=0}%
^{n+1}d_{\gamma}(t)\widehat{M}_{r}^{(\gamma)},\ \ \ \ \ r=1,...,n, \label{q1m}%
\end{equation}
where $\widehat{E}_{r}$ and $\widehat{W}_{r}$ are given, as before by
(\ref{q8}), while the quantized magnetic terms $\widehat{M}_{r}^{(\gamma)}$
are given by
\begin{align}
\widehat{M}_{r}^{(\gamma)}  &  =-\frac{1}{2}i\hslash\left[  \nabla_{j}%
(P_{r}^{(\gamma)})^{j}+(P_{r}^{(\gamma)})^{j}\nabla_{j}\right]  =-\frac{1}%
{2}i\hslash\left[  \left\vert g\right\vert ^{-\frac{1}{2}}\partial
_{j}\left\vert g\right\vert ^{\frac{1}{2}}(P_{r}^{(\gamma)})^{j}%
+(P_{r}^{(\gamma)})^{j}\partial_{j}\right] \label{pm}\\
&  =-\frac{1}{2}i\hslash(\gamma-\frac{1}{2}m)V_{r}^{(\gamma-1)}-i\hslash
(P_{r}^{(\gamma)})^{j}\partial_{j},\text{ \ \ }\partial_{j}=\frac{\partial
}{\partial q_{j}}\nonumber
\end{align}
where the explicit form of the first-order operator $\widehat{M}_{r}%
^{(\gamma)}$ in (\ref{pm}) is obtained by straightforward calculations. Note
that the potentials $V_{r}^{(\gamma-1)}$ in (\ref{pm}) are nontrivial only for
$\gamma=0$ and $\gamma=n+1$. Note also that the minimal quantization
$\widehat{M}_{r}^{(\gamma)}$ of the classical magnetic term (\ref{MT}) depends
on $m$ even though $M_{r}^{(\gamma)}=p^{T}P_{r}^{(\gamma)}$ itself does not
depend on $m$. As in the non-magnetic case, we have

\begin{theorem}
\label{MAINM}For any fixed $m\in\left\{  0,\ldots,n+1\right\}  $, the set of
$n$ operators $\widehat{H}_{r}^{B}$, $r=1,\ldots n$, given by (\ref{7b}) and
(\ref{7c}) with $h_{r}^{A}$ replaced by $\widehat{h}_{r}^{B}$ given by
(\ref{q1m}), satisfies the quantum Frobenius condition (\ref{fcq}).
\end{theorem}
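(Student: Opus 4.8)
The plan is to run, in the magnetic setting, the same argument that establishes Theorem \ref{MAIN}. Each $\widehat{h}_r^B$ is self-adjoint by Definition \ref{minq}, hence so is every (real) linear combination $\widehat{H}_r^B$, so only (\ref{fcq}) needs checking. Since $\widehat{H}_r^B$ is built from the $\widehat{h}_j^B$ through (\ref{7b})--(\ref{7c}) with $t$-dependent \emph{scalar} coefficients, and since minimal quantization is $\mathbb{R}$-linear while the geometric objects $A_r,P_r^{(\gamma)}$ carry no $t$-dependence (so $\partial_{t_s}\widehat{H}_r^B=\widehat{\partial_{t_s}H_r^B}$), it suffices to prove the single operator identity
\[
[\widehat{h}_r^B,\widehat{h}_s^B]=i\hslash\,\widehat{\{h_r^B,h_s^B\}},\qquad r,s=1,\ldots,n.
\]
Indeed, this identity lifts to $[\widehat{H}_r^B,\widehat{H}_s^B]=i\hslash\,\widehat{\{H_r^B,H_s^B\}}$, so the left-hand side of (\ref{fcq}) becomes $i\hslash$ times the minimal quantization of the left-hand side of the classical condition (\ref{fcg}), which in the magnetic case is identically zero (no free functions $c_\alpha$ are involved).

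To prove the operator identity, write $\widehat{h}_r^B=\widehat{\mathcal{E}}_r+\sum_\gamma d_\gamma\widehat{M}_r^{(\gamma)}$ with $\widehat{\mathcal{E}}_r=\widehat{E}_r+\widehat{W}_r$ as in (\ref{hq}) and expand both sides. On the classical side $\{E_r,E_s\}=0$, and from the commutativity of the purely magnetic St\"{a}ckel Hamiltonians $E_r+\sum_\gamma d_\gamma M_r^{(\gamma)}$ with \emph{arbitrary} constants $d_\gamma$ one gets $\{E_r,M_s^{(\gamma)}\}+\{M_r^{(\gamma)},E_s\}=0$ and $\{M_r^{(\gamma)},M_s^{(\gamma')}\}+\{M_r^{(\gamma')},M_s^{(\gamma)}\}=0$; since $d_\gamma(t)d_{\gamma'}(t)$ is symmetric in $\gamma,\gamma'$, this collapses $\{h_r^B,h_s^B\}$ to $\{\mathcal{E}_r,\mathcal{E}_s\}+\sum_\gamma d_\gamma\big(\{W_r,M_s^{(\gamma)}\}+\{M_r^{(\gamma)},W_s\}\big)$. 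The identical expansion on the quantum side --- now using the commutativity (\ref{2.6}) of the quantum magnetic St\"{a}ckel operators, which yields $[\widehat{E}_r,\widehat{M}_s^{(\gamma)}]+[\widehat{M}_r^{(\gamma)},\widehat{E}_s]=0$ and $[\widehat{M}_r^{(\gamma)},\widehat{M}_s^{(\gamma')}]+[\widehat{M}_r^{(\gamma')},\widehat{M}_s^{(\gamma)}]=0$ --- collapses $[\widehat{h}_r^B,\widehat{h}_s^B]$ to $[\widehat{\mathcal{E}}_r,\widehat{\mathcal{E}}_s]+\sum_\gamma d_\gamma\big([\widehat{W}_r,\widehat{M}_s^{(\gamma)}]+[\widehat{M}_r^{(\gamma)},\widehat{W}_s]\big)$. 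By Theorem \ref{algebra} the first term is $i\hslash\,\widehat{\{\mathcal{E}_r,\mathcal{E}_s\}}$, so it remains to match the $\widehat{W}$--$\widehat{M}$ pieces. These are commutators of operators that are \emph{at most first order}, and they are handled by a lemma (the magnetic counterpart of Lemma \ref{l2}): for any vector fields $X,Y$ on $Q$, writing $W_X=X^jp_j$,
\[
[\widehat{W_X},\widehat{W_Y}]=i\hslash\,\widehat{\{W_X,W_Y\}},
\]
which follows from the rewriting $\widehat{W_X}=-i\hslash\big(X+\tfrac{1}{2}\,\mathrm{div}_g X\big)$ of (\ref{qq3}) together with the identity $\mathrm{div}_g[X,Y]=X(\mathrm{div}_g Y)-Y(\mathrm{div}_g X)$. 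Applying it with $(X,Y)=(J_r,P_s^{(\gamma)})$ --- recall $\widehat{W}_r=-i\hslash J_r$ from (\ref{q8}), while $M_s^{(\gamma)}=(P_s^{(\gamma)})^jp_j$ and, by (\ref{pm}), $\widehat{M}_s^{(\gamma)}=-i\hslash\big((P_s^{(\gamma)})^j\partial_j+\tfrac{1}{2}\,\mathrm{div}_g P_s^{(\gamma)}\big)$ with $\mathrm{div}_g P_s^{(\gamma)}=(\gamma-\tfrac{1}{2}m)V_s^{(\gamma-1)}$ --- gives $[\widehat{W}_r,\widehat{M}_s^{(\gamma)}]+[\widehat{M}_r^{(\gamma)},\widehat{W}_s]=i\hslash\big(\widehat{\{W_r,M_s^{(\gamma)}\}}+\widehat{\{M_r^{(\gamma)},W_s\}}\big)$. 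Collecting the two terms yields $[\widehat{h}_r^B,\widehat{h}_s^B]=i\hslash\,\widehat{\{h_r^B,h_s^B\}}$, which proves the theorem.

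The step I expect to be the main obstacle is the direct verification of the $\widehat{W}$--$\widehat{M}$ lemma at the operator level: one must check that the zero-order (potential) parts of $\widehat{M}_s^{(\gamma)}$ --- nontrivial only for $\gamma=0$ and $\gamma=n+1$ --- and the $m$-dependent divergence/Christoffel contribution $-\tfrac{1}{2}mq_n^{-1}$ present in the expansions (\ref{q8})--(\ref{pm}) recombine exactly into the divergence part produced by quantizing the resulting first-order classical bracket. This computation is, however, lighter than the proof of Lemma \ref{l3}: because both operators involved are first order, one needs neither the Schouten-bracket identities (\ref{q9})--(\ref{q11}) nor the Killing property of the $J_r$, and it reduces to pure divergence bookkeeping, entirely parallel to the $q_n^{-1}$ manipulations at the end of the proof of Lemma \ref{l3}.
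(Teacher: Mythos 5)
Your proof is correct, and it takes a genuinely different route from the paper's for the key lemma. The paper establishes (\ref{toprove}) by explicit coordinate computation: it writes out the right-hand side as (\ref{P}), the left-hand side as (\ref{L}), matches the vector-field parts via (\ref{S}), and then verifies the scalar (zero-order) equality (\ref{LP}) by direct calculation, case by case for $\gamma=0$ and $\gamma=n+1$ (the only values giving nontrivial potential contributions), with explicit $q_{n}^{-1}$ bookkeeping. You instead prove the universal statement $[\widehat{W_{X}},\widehat{W_{Y}}]=i\hslash\,\widehat{\{W_{X},W_{Y}\}}$ for arbitrary vector fields $X,Y$ on $Q$, using the rewriting $\widehat{W_{X}}=-i\hslash\bigl(X+\tfrac{1}{2}\,\mathrm{div}_{g}X\bigr)$ together with the coordinate-free identity $\mathrm{div}_{g}[X,Y]=X(\mathrm{div}_{g}Y)-Y(\mathrm{div}_{g}X)$ (which holds because Lie derivatives acting on the Riemannian volume form give a Lie algebra homomorphism). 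This eliminates the paper's case-by-case verification of (\ref{LP}) entirely, and it does not use the Killing property of the $J_{r}$ nor the vanishing of their divergences except as simplifications. Your reduction from $\widehat{H}_{r}^{B}$ to the single identity $[\widehat{h}_{r}^{B},\widehat{h}_{s}^{B}]=i\hslash\,\widehat{\{h_{r}^{B},h_{s}^{B}\}}$, and the collapsing of the cross terms $[\widehat{M}_{r}^{(\gamma)},\widehat{M}_{s}^{(\gamma')}]$ via the symmetry of $d_{\gamma}d_{\gamma'}$ and (\ref{2.6}), is also a bit more explicit than the paper's handling (which appeals to Part I for the reduction to a single fixed $\gamma$). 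One remark: the caution you express in your last paragraph is unnecessary --- since your vector-field lemma is universal and its proof requires only the standard Lie-derivative identity for divergences, there is no residual "divergence bookkeeping" to worry about; the $m$-dependent Christoffel contribution and the $\gamma$-dependent scalar pieces recombine automatically.
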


By the same reasons as in the non-magnetic case, the Hamiltonians
$\widehat{H}_{r}^{B}$ will satisfy the quantum Frobenius condition (\ref{fcq})
as soon as any pair of operators $\widehat{\mathcal{E}}_{r}+\widehat{M}%
_{r}^{(\gamma)}\equiv\widehat{E}_{r}+\widehat{W}_{r}+\widehat{M}_{r}%
^{(\gamma)}$ and $\widehat{\mathcal{E}}_{s}+\widehat{M}_{s}^{(\gamma)}%
\equiv\widehat{E}_{s}+\widehat{W}_{s}+\widehat{M}_{s}^{(\gamma)}$ will satisfy
the same, up to the factor $i\hslash$, commutation relations as their
classical counterparts $\mathcal{E}_{r}+M_{r}^{(\gamma)}\equiv E_{r}%
+W_{r}+M_{r}^{(\gamma)}$ and $\mathcal{E}_{s}+M_{s}^{(\gamma)}\equiv
E_{s}+W_{s}+M_{s}^{(\gamma)}$ (again, we perform all calculations in Vi\`{e}te
coordinates). Thus, we have to show that for any $r,s\in\left\{
1,\ldots,n\right\}  $, any $m\in\left\{  0,\ldots,n+1\right\}  $ and any
$\gamma\in\left\{  0,\ldots,n+1\right\}  $\qquad%
\begin{equation}
i\hslash\widehat{\left\{  E_{r}+W_{r}+M_{r}^{(\gamma)},E_{s}+W_{s}%
+M_{s}^{(\gamma)}\right\}  }=\left[  \widehat{E}_{r}+\widehat{W}%
_{r}+\widehat{M}_{r}^{(\gamma)},\widehat{E}_{s}+\widehat{W}_{s}+\widehat{M}%
_{s}^{(\gamma)}\right]  \label{wk2}%
\end{equation}
as the relation (\ref{wk2}) means that we can perform the appropriate
deformation procedure described in Part I (and shortly revisited in Section
\ref{sec 2}) also in the quantum magnetic case and that this procedure will
yield the same PDE's for functions $d_{\gamma}(t_{1},\ldots,t_{n})$,
$\zeta_{r,j}(t_{1},\dotsc,t_{r-1})$ and $\zeta_{r,r+j}(t_{r+1},\dotsc,t_{n})$
as the corresponding classical procedure.

For any $r,s\in\left\{  1,\ldots,n\right\}  $%

\begin{align*}
\left\{  E_{r}+W_{r}+M_{r}^{(\gamma)},E_{s}+W_{s}+M_{s}^{(\gamma)}\right\}
&  =\left\{  \mathcal{E}_{r},\mathcal{E}_{s}\right\}  +\left\{  E_{r}%
,M_{s}^{(\gamma)}\right\}  +\left\{  M_{r}^{(\gamma)},E_{s}\right\} \\
&  +\left\{  W_{r},M_{s}^{(\gamma)}\right\}  +\left\{  M_{r}^{(\gamma)}%
,W_{s}\right\}  +\left\{  M_{r}^{(\gamma)},M_{s}^{(\gamma)}\right\}  .
\end{align*}
Moreover, the magnetic St\"{a}ckel Hamiltonians themselves (i.e. without the
quasi-St\"{a}ckel term $W_{r}$), both geodesic and with magnetic terms,
commute with each other, so that $\left\{  E_{r},E_{s}\right\}  =0$ and
$\left\{  E_{r}+M_{r}^{(\gamma)},E_{s}+M_{s}^{(\gamma)}\right\}  =0$, which
yields%
\[
\left\{  E_{r},M_{s}^{(\gamma)}\right\}  +\left\{  M_{r}^{(\gamma)}%
,E_{s}\right\}  +\left\{  M_{r}^{(\gamma)},M_{s}^{(\gamma)}\right\}  =0
\]
and thus the left hand side of (\ref{wk2}) is $\left\{  \mathcal{E}%
_{r},\mathcal{E}_{s}\right\}  ,$ is
\begin{equation}
i\hslash\widehat{\left\{  \mathcal{E}_{r},\mathcal{E}_{s}\right\}  }%
+i\hslash\left(  \widehat{\left\{  W_{r},M_{s}^{(\gamma)}\right\}  +\left\{
M_{r}^{(\gamma)},W_{s}\right\}  }\right)  . \label{klasm}%
\end{equation}
Moreover, due to (\ref{2.6}),
\[
\left[  \widehat{E}_{r},\widehat{E}_{s}\right]  =0\text{ and }\left[
\widehat{E}_{r}+\widehat{M}_{r}^{(\gamma)},\widehat{E}_{s}+\widehat{M}%
_{s}^{(\gamma)}\right]  =0
\]
and thus
\[
\left[  \widehat{E}_{r},\widehat{M}_{s}^{(\gamma)}\right]  +\left[
\widehat{M}_{r}^{(\gamma)},\widehat{E}_{s}\right]  +\left[  \widehat{M}%
_{r}^{(\gamma)},\widehat{M}_{s}^{(\gamma)}\right]  =0
\]
so that the right hand side of (\ref{wk2}) is actually%

\[
\lbrack\widehat{\mathcal{E}}_{r},\widehat{\mathcal{E}}_{s}]+\left[
\widehat{E}_{r},\widehat{M}_{s}^{(\gamma)}\right]  +\left[  \widehat{M}%
_{r}^{(\gamma)},\widehat{E}_{s}\right]
\]
By Theorem \ref{algebra}, $i\hslash\widehat{\left\{  \mathcal{E}%
_{r},\mathcal{E}_{s}\right\}  }=[\widehat{\mathcal{E}}_{r}%
,\widehat{\mathcal{E}}_{s}]$ so in order to prove (\ref{wk2}) and thus Theorem
\ref{MAINM} it remains to show that the following relation is valid:
\begin{equation}
i\hslash\widehat{\left(  \left\{  W_{r},M_{s}^{(\gamma)}\right\}  +\left\{
M_{r}^{(\gamma)},W_{s}\right\}  \right)  }=\left[  \widehat{W}_{r}%
,\widehat{M}_{s}^{(\gamma)}\right]  +\left[  \widehat{M}_{r}^{(\gamma
)},\widehat{W}_{s}\right]  . \label{toprove}%
\end{equation}

A direct calculation shows that the right hand side of (\ref{toprove}) is%
\begin{equation}
-\hslash^{2}\left(  \left[  J_{r},P_{s}^{(\gamma)}\right]  +\left[
P_{r}^{(\gamma)},J_{s}\right]  +aJ_{r}\left(  V_{s}^{(\gamma-1)}\right)
-aJ_{s}\left(  V_{r}^{(\gamma-1)}\right)  \right)  \label{P}%
\end{equation}
with $a$ denoting (in this proof) $\frac{1}{2}(\gamma-\frac{1}{2}m)$, so that
it is a sum of a vector field and a function on $\mathcal{M}$. Further%
\[
\left\{  W_{r},M_{s}^{(\gamma)}\right\}  +\left\{  M_{r}^{(\gamma)}%
,W_{s}\right\}  =\Theta^{i}p_{i}%
\]
for some vector field $\Xi=\Xi^{i}\frac{\partial}{\partial q_{i}}$ on
$\mathcal{M}$. Thus, due to (\ref{qq3}), the left hand side of (\ref{toprove})
is%
\begin{equation}
\hslash^{2}\left(  \Theta+\tfrac{1}{2}\left(  \partial_{j}\Theta^{j}\right)
+\tfrac{1}{4}\left\vert g\right\vert ^{-1}\left(  \partial_{j}\left\vert
g\right\vert \right)  \Theta^{j}\right)  \label{L}%
\end{equation}
and as such is also a sum of a vector field and a function on $\mathcal{M}$.
The vector field parts of (\ref{P}) and (\ref{L}) are equal due to (\ref{S})
so it remains to prove that%
\begin{equation}
\tfrac{1}{2}\left(  \partial_{j}\Theta^{j}\right)  +\tfrac{1}{4}\left\vert
g\right\vert ^{-1}\left(  \partial_{j}\left\vert g\right\vert \right)
\Theta^{j}=aJ_{r}\left(  V_{s}^{(\gamma-1)}\right)  -aJ_{s}\left(
V_{r}^{(\gamma-1)}\right)  \label{LP}%
\end{equation}
A direct calculation shows that both sides of (\ref{LP}) are zero for all
$\gamma$ except $\gamma=0$ and $\gamma=n+1$. For $\gamma=n+1$ both sides of
(\ref{LP}) for a given choice of indices $(r,s)$ are%
\[%
\begin{cases}
0, & \text{for }r\in I_{1}^{m}\text{ and }s\in I_{2}^{m},\\
a(r-s)q_{r+s-(n-m+2)}, & \text{for }r,s\in I_{1}^{m},\\
-a(r-s)q_{r+s-(n-m+2)},, & \text{for }r,s\in I_{2}^{m}.
\end{cases}
\]
while for $\gamma=0$ both sides of (\ref{LP}) for a given choice of indices
$(r,s)$ are%
\[
\frac{a(m-1)}{q_{n}}\left(  q_{r-1}\delta_{s,n-m+2}-q_{s-1}\delta
_{r,n-m+2}\right)  +%
\begin{cases}
0, & \text{for }r\in I_{1}^{m}\text{ and }s\in I_{2}^{m},\\
a(r-s)q_{r+s-(n-m+2)}, & \text{for }r,s\in I_{1}^{m},\\
-a(r-s)q_{r+s-(n-m+2)}, & \text{for }r,s\in I_{2}^{m}.
\end{cases}
.
\]

Therefore, (\ref{toprove}) is valid. This also concludes the proof of Theorem
\ref{MAINM}.

\begin{example}
\label{4e}Consider the case $n=3$ and $m=3$ (see subsection 7.2 in Part I),
which means that $I_{1}^{m}=\emptyset$ while $I_{2}^{m}=\left\{  2,3\right\}
$. Further, $\kappa_{1}=\kappa_{2}=1$ and thus the vector fields $J_{r}$ are
\[
J_{2}=q_{2}\partial_{2}+2q_{3}\partial_{3}\text{, \ }J_{3}=q_{3}\partial_{2}%
\]
so that $W_{1}=0$, $W_{2}=p_{2}q_{2}+2q_{3}p_{3}$ while $W_{3}=q_{3}p_{2}$.
The curve (\ref{mag}) becomes%
\[
\sum_{\gamma=0}^{4}d_{\gamma}(t)x^{\gamma}y+x^{2}h_{1}+xh_{2}+h_{3}=\frac
{1}{2}x^{3}y^{2}%
\]
leading to Hamiltonians $h_{r}$ (\ref{mag1}) that in Vi\`{e}te coordinates
attain the form%
\[
h_{r}=\frac{1}{2}p^{T}A_{r}p+\sum_{\gamma=0}^{4}d_{\alpha}(t)p^{T}%
P_{r}^{(\gamma)},\quad r=1,\dotsc,3,
\]
with $A_{r}$ given by (see (\ref{n1}) and (\ref{n2})):
\[
A_{1}=G=\left(
\begin{array}
[c]{ccc}%
-q_{1} & -q_{2} & -q_{3}\\
-q_{2} & -q_{3} & 0\\
-q_{3} & 0 & 0
\end{array}
\right)  \text{, \ }A_{2}=\left(
\begin{array}
[c]{ccc}%
-q_{2} & -q_{3} & 0\\
-q_{3} & -q_{1}q_{3}+q_{2}^{2} & q_{2}q_{3}\\
0 & q_{2}q_{3} & q_{3}^{2}%
\end{array}
\right)  \text{, \ \ }A_{3}=\left(
\begin{array}
[c]{ccc}%
-q_{3} & 0 & 0\\
0 & q_{2}q_{3} & q_{3}^{2}\\
0 & q_{3}^{2} & 0
\end{array}
\right)  .
\]
Further, the vector potentials $P_{r}^{(\gamma)}$ are given by (\ref{Pwq}).
Explicitly
\begin{align}
P_{1}^{(0)}  &  =\text{ }(0,0,1)^{T}\text{, \ }P_{2}^{(0)}=(0,1,q_{1}%
)^{T}\text{, \ }P_{3}^{(0)}=(1,q_{1},q_{2})^{T},\nonumber\\
P_{1}^{(1)}  &  =\text{ }(0,1,0)^{T}\text{, \ }P_{2}^{(1)}=(1,q_{1}%
,0)^{T}\text{, \ }P_{3}^{(1)}=(0,0,-q_{3})^{T},\nonumber\\
P_{1}^{(2)}  &  =\text{ }(1,0,0)^{T}\text{, \ }P_{2}^{(2)}=(0,-q_{2}%
,-q_{3})^{T}\text{, \ }P_{3}^{(2)}=(0,-q_{3},0)^{T},\label{Pr}\\
P_{1}^{(3)}  &  =\text{ }(-q_{1},-q_{2},-q_{3})^{T}\text{, \ }P_{2}%
^{(3)}=(-q_{2},-q_{3},0)^{T}\text{, \ }P_{3}^{(3)}=(-q_{3},0,0)^{T},\text{
}\nonumber\\
P_{1}^{(4)}  &  =\text{ }(q_{1}^{2}-q_{2},q_{1}q_{2}-q_{3},q_{1}q_{3}%
)^{T}\text{, \ }P_{2}^{(4)}=(q_{1}q_{2}-q_{3},q_{2}^{2},q_{2}q_{3})^{T}\text{,
\ }P_{3}^{(4)}=(q_{1}q_{3},q_{2}q_{3},q_{3}^{2})^{T}.\nonumber
\end{align}
As usual, $h_{r}^{B}=h_{r}+W_{r}$ and, since $\kappa_{1}=\kappa_{2}<n$ and due
to (\ref{7b}), (\ref{7c}) and results in Part I
\begin{equation}
H_{1}^{B}=h_{1}^{B}\text{, \ }H_{2}^{B}=h_{2}^{B}+t_{3}h_{3}^{B}\text{,
\ }H_{3}^{B}=h_{3}^{B}. \label{hp3q}%
\end{equation}
The functions $d_{\alpha}$ can now be determined from the Frobenius condition
(\ref{fcg}) (with $f_{rs}=0$). It leads to an overdetermined but soluble
system of PDE's. The general solution of this system is%
\[
d_{0}=b_{0}\exp(2t_{2})\text{, \ }d_{1}=b_{0}t_{3}\exp(2t_{2})+b_{1}\exp
(t_{2})\text{, \ }d_{2}=b_{2}\text{, }d_{3}=(b_{3}+b_{4}t_{1})\text{, \ }%
d_{4}=b_{4},
\]
(parametrized by the arbitrary constants $b_{r}$). Let us now minimally
quantize the Hamiltonians $H_{r}$ in (\ref{hp3q}). Naturally%
\[
\widehat{H}_{1}^{B}=\widehat{h}_{1}^{B}\text{, \ }\widehat{H}_{2}%
^{B}=\widehat{h}_{2}^{B}+t_{3}\widehat{h}_{3}^{B}\text{, \ }\widehat{H}%
_{3}^{B}=\widehat{h}_{3}^{B}%
\]
where $\widehat{h}_{r}^{B}$ are given by (\ref{q1m}) where $\widehat{E}_{r}$
and $\widehat{W}_{r}$ can computed using (\ref{q8}) and where $\widehat{M}%
_{r}^{(\gamma)}\,$\ can be computed by (\ref{pm}). The result is%
\[
\widehat{h}_{r}^{B}=\widehat{E}_{r}-i\hslash J_{r}+b_{0}\exp(2t_{2}%
)\widehat{M}_{r}^{(0)}+[b_{0}t_{3}\exp(2t_{2})+b_{1}\exp(t_{2})]\widehat{M}%
_{r}^{(1)}+b_{2}\widehat{M}_{r}^{(2)}+(b_{3}+b_{4}t_{1})\widehat{M}_{r}%
^{(3)}+b_{4}\widehat{M}_{r}^{(4)}\text{,}%
\]
where%
\begin{align*}
\widehat{E}_{1}  &  =-\frac{1}{2}\hslash^{2}\left(  -q_{1}\partial_{1}%
^{2}-q_{3}\partial_{2}^{2}-2q_{2}\partial_{1}\partial_{2}-2q_{3}\partial
_{1}\partial_{3}-\frac{3}{2}\partial_{1}\right)  ,\\
\widehat{E}_{2}  &  =-\frac{1}{2}\hslash^{2}\left(  -q_{2}\partial_{1}%
^{2}+(-q_{1}q_{3}+q_{2}^{2})\partial_{2}^{2}+q_{3}^{2}\partial_{3}^{2}%
-2q_{3}\partial_{1}\partial_{2}+2q_{2}q_{3}\partial_{2}\partial_{3}+\frac
{3}{2}q_{2}\partial_{2}+\frac{3}{2}q_{3}\partial_{3}\right)  ,\\
\widehat{E}_{3}  &  =-\frac{1}{2}\hslash^{2}\left(  -q_{3}\partial_{1}%
^{2}+q_{2}q_{3}\partial_{2}^{2}-2q_{3}^{2}\partial_{2}\partial_{3}+\frac{3}%
{2}q_{3}\partial_{2}\right)
\end{align*}
and where%
\begin{align*}
\widehat{M}_{r}^{(0)}  &  =\frac{3}{4}i\hslash\frac{q_{r-1}}{q_{n}}%
-i\hslash(P_{r}^{(0)})^{j}\partial_{j},\\
\widehat{M}_{r}^{(1)}  &  =-\frac{1}{4}i\hslash\delta_{r,3}-i\hslash
(P_{r}^{(1)})^{j}\partial_{j},\\
\widehat{M}_{r}^{(2)}  &  =\frac{1}{4}i\hslash\delta_{r,2}-i\hslash
(P_{r}^{(2)})^{j}\partial_{j},\\
\widehat{M}_{r}^{(3)}  &  =\frac{3}{4}i\hslash\delta_{r,1}-i\hslash
(P_{r}^{(3)})^{j}\partial_{j},\\
\widehat{M}_{r}^{(4)}  &  =-\frac{5}{4}i\hslash q_{r}-i\hslash(P_{r}%
^{(4)})^{j}\partial_{j}%
\end{align*}
with $r=1,2,3$ and with $P_{r}^{(\gamma)}$ given by (\ref{Pr}). It can be
demonstrated by a direct computation that $\widehat{H}_{r}^{B}$ do satisfy the
quantum Frobenius condition (\ref{fcq}).
\end{example}

\section{Quantum canonical transformations between magnetic and non-magnetic
quantum Painlev\'{e} systems\label{s5}}

In this chapter we prove that the magnetic quantum Painlev\'{e} operator
$\,\widehat{H}_{r}^{B}$ can be transformed, by a multitime-dependent quantum
canonical transformations (see \cite{kim}), to a corresponding non-magnetic
quantum Painlev\'{e} \ operator $\widehat{H}_{r}^{A}$. Since $\widehat{H}%
_{r}^{A}$ contains $n+3$ parameters while $\,\widehat{H}_{r}^{B}$ contains
only $n+2$ parameters, we have first to extend the quantum magnetic system
$\widehat{h}_{r}^{B}$ by one parameter to the system%
\begin{equation}
\widehat{h}_{r}^{B}=\widehat{E}_{r}+\widehat{W}_{r}+\sum_{\gamma=0}%
^{n+1}d_{\gamma}(t)\widehat{M}_{r}^{(\gamma)}+\overline{b}e_{n}(t)V_{r}%
^{(n)}-\overline{b}e_{n-r}(t),\text{ \ \ }r=1,\ldots,n \label{q1me}%
\end{equation}
(as we also did in the classical case, see Section 8 of Part I) where from the
Frobenius condition it follows that $e_{n}(t_{1},\dotsc,t_{n})=1$ for
$m=0,\dotsc,n$ and $e_{n}(t_{1},\dotsc,t_{n})=\exp(t_{1})$ for $m=n+1$ and
where $e_{n-r}$ are chosen so that $\widehat{h}_{r}^{B}$ in (\ref{q1me})
satisfy the quantum Frobenius condition (\ref{fcq}). Each $\widehat{H}_{r}%
^{B}$ is then obtained dy deforming of $\widehat{h}_{r}^{B}$ in (\ref{q1me})
through an appropriate formula in (\ref{7b}) or (\ref{7c}).

\begin{theorem}
The multi-time dependent quantum canonical transformation
\begin{align}
\widehat{H}_{r}^{A}  &  =U\,\widehat{H}_{r}^{B}\,U^{\dagger}+i\hslash
U\frac{\partial U^{\dagger}}{\partial t_{r}},\ \ U=U(\lambda,t)=e^{F(\lambda
,t)}\text{, }\label{tQT}\\
F(\lambda,t)  &  =\left(  -\frac{i}{\hslash}\sum_{\gamma\in I_{\gamma}}%
\frac{1}{\gamma-m+1}d_{\gamma}(t)\sum\limits_{j=1}^{n}\lambda_{j}^{\gamma
-m+1}\right)  \text{, \ }r=1,\ldots n\nonumber
\end{align}
transforms the magnetic quantum Painlev\'{e} Hamiltonian operators
$\widehat{H}_{r}^{B}$ into the corresponding non-magnetic quantum Painlev\'{e}
Hamiltonian operators $\widehat{H}_{r}^{A}$, provided that the functions
$\zeta_{r,j}$ and $d_{\gamma}$ satisfy the same set of first order linear
PDE's as in the classical case (see Part I, Theorem 4).
\end{theorem}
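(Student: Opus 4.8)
The plan is to show that the identity (\ref{tQT}) is the minimal quantization of the classical multi-time canonical transformation of Part~I (Theorem~4 there), and that --- just as the quantum commutators matched the classical Poisson brackets in Theorem~\ref{algebra} and Lemma~\ref{l3} --- minimal quantization intertwines that transformation with \emph{no residual $\hslash$-correction}, so that the quantum statement reduces to the classical one. I would work throughout in the separation coordinates $\lambda$, in which $F=-\tfrac{i}{\hslash}\Phi$ with $\Phi(\lambda,t)=\sum_{\gamma}\tfrac{d_{\gamma}(t)}{\gamma-m+1}\sum_{j}\lambda_{j}^{\gamma-m+1}$ a real function, and in which $A_{r}$, $g$ and the operators $\widehat{E}_{r}$ of (\ref{2.5}) are diagonal.

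First I would record the elementary facts about $U$. Since $\Phi$ enters only as a multiplication operator and is real, $U=e^{F}$ is unitary, commutes with all scalar and vector potentials and with $\partial_{t_{r}}\Phi$, and $i\hslash\,U\,\partial_{t_{r}}U^{\dagger}=-i\hslash\,\partial_{t_{r}}F=-\partial_{t_{r}}\Phi$ is again multiplication by a real function; in particular the right-hand side of (\ref{tQT}) is self-adjoint. Moreover $U\,D\,U^{\dagger}$ is obtained from any differential operator $D$ by the substitution $\partial_{j}\mapsto\partial_{j}+\tfrac{i}{\hslash}\,\partial\Phi/\partial\lambda_{j}$, and $\partial\Phi/\partial\lambda_{j}=\sum_{\gamma}d_{\gamma}\lambda_{j}^{\gamma-m}$ is exactly the classical magnetic shift of (\ref{zm}), (\ref{QT}). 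Since $\widehat{H}_{r}^{A}$ and $\widehat{H}_{r}^{B}$ are built from $\widehat{h}_{j}^{A}$ and from the extended operators $\widehat{h}_{j}^{B}$ of (\ref{q1me}) by the \emph{same} $t$-dependent combinations (\ref{7b})--(\ref{7c}), and conjugation by $U$ is $\mathbb{R}$-linear, (\ref{tQT}) is equivalent to the family
\[
\sum_{j}\zeta_{r,j}(t)\Bigl(U\,\widehat{h}_{j}^{B}\,U^{\dagger}-\widehat{h}_{j}^{A}\Bigr)=-i\hslash\,U\,\partial_{t_{r}}U^{\dagger}=\partial_{t_{r}}\Phi,\qquad r=1,\dots,n .
\]

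The core of the argument --- and the step I expect to be the main obstacle --- is the explicit spatial conjugation $U\,\widehat{h}_{j}^{B}\,U^{\dagger}$, carried out term by term on (\ref{q1me}) via (\ref{q8}) and (\ref{pm}). The one-parameter extension term $\overline{b}\,e_{n}(t)V_{j}^{(n)}-\overline{b}\,e_{n-j}(t)$ is multiplication, hence untouched; conjugating $\widehat{W}_{j}=-i\hslash J_{j}^{k}\partial_{k}$ gives $\widehat{W}_{j}+J_{j}(\Phi)$. In $\widehat{E}_{j}$ the substitution $\partial_{k}\mapsto\partial_{k}+\tfrac{i}{\hslash}\,\partial\Phi/\partial\lambda_{k}$ leaves the principal part intact, and --- using $(P_{j}^{(\gamma)})^{k}=-A_{j}^{kk}\lambda_{k}^{\gamma-m}$ together with diagonality --- its first-order correction cancels \emph{exactly} the first-order magnetic terms $-i\hslash\sum_{\gamma}d_{\gamma}(P_{j}^{(\gamma)})^{k}\partial_{k}$ of $\sum_{\gamma}d_{\gamma}\widehat{M}_{j}^{(\gamma)}$. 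The remaining zero-order terms split into (i) an $\hslash$-free part which, via the identity $\sum_{k}A_{j}^{kk}\lambda_{k}^{\delta}=-V_{j}^{(\delta+m)}$, reassembles into the classical normal-form potential $\tfrac{1}{2}\sum_{\gamma,\gamma'}d_{\gamma}d_{\gamma'}V_{j}^{(\gamma+\gamma'-m)}$ of (\ref{NF}); and (ii) two $\hslash$-proportional contributions --- one coming from the metrically contracted Christoffel part $\tfrac{1}{2}\hslash^{2}\sum_{k}A_{j}^{kk}\Gamma_{k}\partial_{k}$ of $\widehat{E}_{j}$ in (\ref{2.5}), the other being the zero-order part $-\tfrac{1}{2}i\hslash(\gamma-\tfrac{1}{2}m)V_{j}^{(\gamma-1)}$ of $\widehat{M}_{j}^{(\gamma)}$ in (\ref{pm}) --- which, because $\Gamma_{k}=-\tfrac{1}{2}m\lambda_{k}^{-1}$, both reduce to $\pm\tfrac{i\hslash}{2}\sum_{\gamma}d_{\gamma}(\gamma-\tfrac{1}{2}m)V_{j}^{(\gamma-1)}$ and cancel. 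Verifying that \emph{no} $\hslash$-order remainder survives this bookkeeping is the quantum analogue of the Robertson-type miracle underlying (\ref{2.6}), and is the crux of the proof. Its outcome is
\[
U\,\widehat{h}_{j}^{B}\,U^{\dagger}=\widehat{E}_{j}+\widehat{W}_{j}+J_{j}(\Phi)+\tfrac{1}{2}\sum_{\gamma,\gamma'}d_{\gamma}d_{\gamma'}V_{j}^{(\gamma+\gamma'-m)}+\overline{b}\,e_{n}(t)V_{j}^{(n)}-\overline{b}\,e_{n-j}(t),
\]
i.e. $U\,\widehat{h}_{j}^{B}\,U^{\dagger}$ is exactly the minimal quantization of the classically shifted Hamiltonian.

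Finally, $U\,\widehat{h}_{j}^{B}\,U^{\dagger}-\widehat{h}_{j}^{A}$ is then a pure multiplication operator --- a fixed polynomial in the $V_{j}^{(\alpha)}$ with coefficients assembled from $d_{\gamma}d_{\gamma'}$, $\overline{b}\,e_{\bullet}(t)$ and $c_{\alpha}(t)$ --- so the displayed reduced identity contains no $\hslash$ at all and is \emph{verbatim} the classical identity underlying Part~I, Theorem~4. Comparing the coefficients of the independent potentials on its two sides (after expressing the power sums occurring in $\partial_{t_{r}}\Phi$ through the $V_{j}^{(\alpha)}$) reproduces exactly the overdetermined system of first-order linear PDE's for $\zeta_{r,j}$, $d_{\gamma}$ (and the induced $c_{\alpha}$) of the classical case. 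Under the theorem's hypothesis that $\zeta_{r,j}$ and $d_{\gamma}$ solve this system, the reduced identities hold for every $r$, hence so does (\ref{tQT}); equivalently, $\Psi\mapsto U\Psi$ carries common multi-time solutions of $i\hslash\,\partial_{t_{r}}\Psi=\widehat{H}_{r}^{B}\Psi$ to those of $i\hslash\,\partial_{t_{r}}\Psi=\widehat{H}_{r}^{A}\Psi$, both systems being consistent by Theorems~\ref{MAIN} and~\ref{MAINM}.
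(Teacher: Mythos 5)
Your proof follows essentially the same route as the paper, but supplies more of the spatial conjugation explicitly. The paper posits the matching condition (\ref{wymog}), splits it into (\ref{w1}) and (\ref{w2}), and points to Section~\ref{s2} and Part~I for the underlying conjugation fact; you re-derive that fact term by term via the substitution $\partial_{j}\mapsto\partial_{j}+\tfrac{i}{\hslash}\partial\Phi/\partial\lambda_{j}$, arrive at the same pair of conditions (the $\hslash$-free potential identity defining $c_{\alpha}$, and the first-order identity $J_{j}(\Phi)+\partial_{t_{r}}\Phi$-type condition), and then invoke the classical Theorem~4 from Part~I exactly as the paper does. The decomposition into the extreme range of $r$ (where $\widehat{H}_{r}=\widehat{h}_{r}$) and the middle range (linear combinations) is handled the same way in spirit; your formulation via the combined family $\sum_{j}\zeta_{r,j}(U\widehat{h}_{j}^{B}U^{\dagger}-\widehat{h}_{j}^{A})=\partial_{t_{r}}\Phi$ is a compact rephrasing of the paper's two-case split.

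One point in your bookkeeping is stated incorrectly, although your final formula and conclusion are right. You attribute the $\hslash$-proportional zero-order remainder from conjugating $\widehat{E}_{j}$ entirely to the Christoffel piece $+\tfrac{1}{2}\hslash^{2}\sum_{k}A_{j}^{kk}\Gamma_{k}\partial_{k}$. But conjugating that term alone gives only
\[
\tfrac{1}{2}i\hslash\sum_{k}A_{j}^{kk}\Gamma_{k}\Phi_{k}=\tfrac{1}{4}i\hslash\,m\sum_{\gamma}d_{\gamma}V_{j}^{(\gamma-1)} ,
\]
which is not $\tfrac{i\hslash}{2}\sum_{\gamma}d_{\gamma}(\gamma-\tfrac{1}{2}m)V_{j}^{(\gamma-1)}$. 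The missing piece comes from the leading second-order part $-\tfrac{1}{2}\hslash^{2}\sum_{k}A_{j}^{kk}\partial_{k}^{2}$, whose conjugation produces a zero-order term $-\tfrac{1}{2}i\hslash\sum_{k}A_{j}^{kk}\Phi_{kk}=\tfrac{1}{2}i\hslash\sum_{\gamma}d_{\gamma}(\gamma-m)V_{j}^{(\gamma-1)}$. Adding the two gives $\tfrac{i\hslash}{2}\sum_{\gamma}d_{\gamma}(\gamma-\tfrac{1}{2}m)V_{j}^{(\gamma-1)}$, and only then does the cancellation against the zero-order part $-\tfrac{1}{2}i\hslash(\gamma-\tfrac{1}{2}m)V_{j}^{(\gamma-1)}$ of $\widehat{M}_{j}^{(\gamma)}$ come out clean. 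So you should credit \emph{both} the Christoffel piece and the $\Phi_{kk}$-term from the principal part; otherwise the ``Robertson-type miracle'' you flag as the crux is left with an apparent deficit of $\tfrac{1}{2}i\hslash\sum_{\gamma}d_{\gamma}(\gamma-m)V_{j}^{(\gamma-1)}$.
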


This theorem is the quantum counterpart of Theorem 4 in Part I. Note that
(\ref{tQT}) is in fact covariant, just as (\ref{QT}) is.

\begin{proof}
Let us demand that the transformation (\ref{tQT}) maps the operator
$\widehat{H}_{r}^{B}$, onto the corresponding operator $\widehat{H}_{r}^{A}$
with some functions $c_{\alpha}(t)$. Consider first the Hamiltonians
$\widehat{h}_{r}^{B}$ for which $\widehat{H}_{r}^{B}=\widehat{h}_{r}^{B}$ i.e.
when $r\in\left\{  1,\ldots,\kappa_{1}\right\}  \cup\left\{  n-\kappa
_{2}+1,\ldots,n\right\}  $. In such cases, the relation (\ref{tQT}) is
satisfied provided that (cf. Part I)
\begin{equation}
\sum_{\alpha=-m}^{2n-m+2}c_{\alpha}(t)V_{r}^{(\alpha)}=\sum_{\gamma
,\gamma^{\prime}=0}^{n+1}d_{\gamma}(t)d_{\gamma^{\prime}}(t)V_{r}%
^{(\gamma+\gamma^{\prime}-m)}+\overline{b}e_{n}(t)V_{r}^{(n)}-\overline
{b}e_{n-r}(t)+S_{r}(t,\lambda)+\frac{\partial F(\lambda,t)}{\partial t_{r}%
},\quad r=1,\dotsc,n, \label{wymog}%
\end{equation}
where functions $S_{r}$ are given by%
\begin{equation}
S_{r}=U\widehat{W}_{r}U^{\dagger}-\widehat{W}_{r}=\sum_{\gamma=0}^{n+1}%
\sum_{j=1}^{n}d_{\gamma}(t)J_{r}^{j}\lambda_{j}^{\gamma-m} \label{Sr}%
\end{equation}
The condition (\ref{wymog}) is satisfied if and only if
\begin{equation}
\sum_{\alpha=-m}^{2n-m+2}c_{\alpha}(t)V_{r}^{(\alpha)}=\sum_{\gamma
,\gamma^{\prime}=0}^{n+1}d_{\gamma}(t)d_{\gamma^{\prime}}(t)V_{r}%
^{(\gamma+\gamma^{\prime}-m)}+\overline{b}e_{n}(t)V_{r}^{(n)}-\overline
{b}e_{n-r}(t) \label{w1}%
\end{equation}
and%
\begin{equation}
S_{r}(t,\lambda)+\frac{\partial F(\lambda,t)}{\partial t_{r}}=0 \label{w2}%
\end{equation}
separately. The condition (\ref{w1}) defines a map between the functions
$e_{r}(t),d_{\gamma}(t)$ and $c_{\alpha}(t)$ which reconstructs the classical
result. Since $S_{r}$ in (\ref{Sr}) is equal to its classical counterpart (see
Part I, Appendix B) and since the partial derivatives $\frac{\partial
F(\lambda,t)}{\partial t_{r}}\,\ $\ are also equal in the classical and in the
quantum case, the condition (\ref{w2}) reconstructs exactly the system of
PDE's for functions $d_{\gamma}(t)$, $\zeta_{r,j}(t)$ from Theorem 4 in Part
I. Further, in case that $r\in\left\{  \kappa_{1}+1,\ldots,n-\kappa
_{2}\right\}  $ the operator $\widehat{H}_{r}^{B}$ is the same linear
combination of appropriate operators $\widehat{h}_{s}^{B}$ as the
corresponding classical counterparts. It means that the proof of the above
theorem reduces to the proof of its classical counterpart (Theorem 4 Part I).
\end{proof}

\begin{example}
Let us choose $n=3$ and $m=1$\ and let us consider the system (7.2) in Part I
with $b_{4}=b_{2}=b_{1}=b_{0}=0$ and $b_{3}=b$ an arbitrary parameter (i.e.
the system from Example 4 from Part I). Its minimal quantization has the form
(\ref{q1me}) with $\overline{b}=0$ and explicitly reads%
\[
\widehat{h}_{r}^{B}=\widehat{E}_{r}-i\hslash J_{r}+b(t_{2}+t_{3}%
^{2})\widehat{M}_{r}^{(1)}+2bt_{3}\widehat{M}_{r}^{(2)}+b\widehat{M}_{r}%
^{(3)}\text{, \ \ }r=1,2,3
\]
where $\widehat{E}_{r}$ and $J_{r}$ are exactly as in Example \ref{4e} above,
and where $\widehat{M}_{r}^{(\gamma)}=-\frac{1}{2}i\hslash(\gamma-\frac{1}%
{2})V_{r}^{(\gamma-1)}-i\hslash(P_{r}^{(\gamma)})^{j}\partial_{j} $ with
$P_{r}^{(\gamma)}$ given by (\ref{Pr}). Then the operators $\widehat{H}%
_{1}^{B}=\widehat{h}_{1}^{B},~\widehat{H}_{2}^{B}=\widehat{h}_{2}^{B}$ and
$\widehat{H}_{3}^{B}=\widehat{h}_{3}^{B}+t_{2}\widehat{h}_{1}^{B}$ satisfy the
quantum Frobenius condition (\ref{fcq}).The unitary operator $U$ in
(\ref{tQT}) is explicitly given by $U=U(\lambda,t)=e^{F(\lambda,t)}$ with%
\begin{align*}
F(\lambda,t)  &  =-\frac{i}{\hslash}\left[  b(t_{2}+t_{3}^{2})\sum
\limits_{j=1}^{n}\lambda_{j}+bt_{3}\sum\limits_{j=1}^{n}\lambda_{j}^{2}%
+\frac{1}{3}b\sum\limits_{j=1}^{n}\lambda_{j}^{3}\right]  =\\
&  =\frac{i}{\hslash}\left[  b(t_{2}+t_{3}^{2})q_{1}+bt_{3}(2q_{2}-q_{1}%
^{2})+\frac{1}{3}b(q_{1}^{3}-3q_{1}q_{2}+3q_{3})\right]
\end{align*}
and the quantum canonical transformation (\ref{tQT}) yields the quantum
Hamiltonians $\widehat{H}_{r}^{A}$ of the form $\widehat{H}_{1}^{A}%
=\widehat{h}_{1}^{A},~\widehat{H}_{2}^{A}=\widehat{h}_{2}^{A}$ and
$\widehat{H}_{3}^{A}=\widehat{h}_{3}^{A}+t_{2}\widehat{h}_{1}^{A}$ with%
\[
\widehat{h}_{r}^{A}=\widehat{E}_{r}-i\hslash J_{r}+\sum_{\alpha=0}%
^{5}c_{\alpha}(t_{1},\dotsc,t_{n})V_{r}^{(\alpha)}%
\]
with
\begin{align*}
c_{5}  &  =\frac{1}{2}b^{2}\text{, }c_{4}=2b^{2}t_{3}\text{, }c_{3}%
=b^{2}(3t_{3}^{2}+t_{2})\text{, }c_{2}=2b^{2}(t_{2}+t_{3}^{2})t_{3}\text{,}\\
c_{1}  &  =\frac{1}{2}b^{2}(t_{2}^{2}+t_{3}^{4}+2t_{2}t_{3}^{2})+2bt_{3}%
\text{, }c_{0}=2b(t_{2}+t_{3}^{2})t_{3}+2b^{2}t_{2}t_{3}(t_{2}+t_{3}^{2}).
\end{align*}
This is exactly the non-magnetic system from Example 2 in Part I provided that
we put $a_{5}=\frac{1}{2}b^{2}$ (note however that the form of non-dynamical
parts is different above and in the mentioned Example). One can show by a
direct computation that $\widehat{H}_{r}^{A}$ do indeed satisfy the quantum
Frobenius condition.
\end{example}

\section{Conclusions}

In the series of articles (Part I-III) we have constructed multi-parameter and
multi-dimensional hierarchies of Painlev\'{e}-type systems, both classical and
quantum, from the corresponding classical respectively quantum
St\"{a}ckel-type systems. In particular, they contain the famous one-degree of
freedom Painlev\'{e} equations $P_{I}-P_{IV}$. Each hierarchy was presented in
the ordinary as well as in the magnetic regime. We also constructed the
multi-time canonical maps between both regimes (on the classical level and on
the quantum level). Also, the proposed Painlev\'{e} hierarchies $P_{I}-P_{IV}%
$, presented in Part II, can be written in the quantum version.

One of the directions of future research is to identify the obtained
hierarchies with the known hierarchies constructed by other methods, such as
appropriate reductions of soliton hierarchies. We stress that our method is
much more complete than the existing methods as in our construction, for a
given number $n$ of degrees of freedom, we obtain $n$ different
Painlev\'{e}-type systems that mutually satisfy the Frobenius integrability
condition, while the existing methods often produce only one system for a
given $n$. Another, related, direction of future research is to find a
systematic way of relating the obtained hierarchies of Painlev\'{e}-type
systems with non-autonomous and non-homogeneous soliton-type hierarchies.

\begin{acknowledgement}
MB wished to express his gratitude to Department of Science and Technology of
Link\"{o}ping University, Sweden, for their hospitality during his visits.
\end{acknowledgement}


\begin{thebibliography}{99}                                                                                               %


\bibitem {Ben1997}S. Benenti, \emph{Intrinsic characterization of the variable
separation in the Hamilton-Jacobi equation}, J. Math. Phys. \textbf{38}
(1997), no. 12, 6578--6602.

\bibitem {Ben2002a}S. Benenti, C. Chanu, G. Rastelli, \emph{Remarks on the
connection between the additive separation of the Hamilton-Jacobi equation and
the multiplicative separation of the Schr\"{o}dinger equation. I. The
completeness and Robertson conditions, }J. Math. Phys. \textbf{43 }(11) (2002) 5183--5222.

\bibitem {Ben2002b}S. Benenti, C. Chanu, G. Rastelli, \emph{Remarks on the
connection between the additive separation of the Hamilton-Jacobi equation and
the multiplicative separation of the Schr\"{o}dinger equation. II. First
integrals and symmetry operators}, J. Math. Phys. \textbf{43 }(11) (2002) 5223--5253.

\bibitem {blasz2007}M. B\l aszak, A. Sergyeyev, \emph{Natural coordinates for
a class of Benenti systems}, Phys. Lett. A \textbf{365} (2007), no. 1--2, 28--33.

\bibitem {blasz2011}M. B\l aszak, A. Sergyeyev, \emph{Generalized St\"{a}ckel
systems}, Phys. Lett. A \textbf{375} (2011), no. 27, 2617--2623.

\bibitem {blasz2016}M. B\l aszak, K. Marciniak, Z. Doma\'{n}ski,
\emph{Separable quantizations of St\"{a}ckel systems}, Ann. Phys.
\textbf{371}\ (2016) \ 460-477.

\bibitem {Book}M. B\l aszak, \emph{Quantum versus Classical Mechanics and
Integrability Problems}, Springer Nature, Switzerland AG, 2019.

\bibitem {arxiv}M. B\l aszak, K. Marciniak, A. Sergyeyev, \emph{Deforming Lie
algebras to Frobenius integrable non-autonomous Hamiltonian systems}, Rep.
Math, Phys. \textbf{87}\ (2021) \ 249-263.

\bibitem {part1}M. B\l aszak, K. Marciniak, Z. Doma\'{n}ski, \emph{Systematic
construction of non-autonomous Hamiltonian equations of Painlev\'{e}-type. I.
Frobenius integrability,} Stud. Appl. Math. \textbf{148} (2022) pp. 1208--1250.

\bibitem {part2}M. B\l aszak, Z. Doma\'{n}ski, K. Marciniak, \emph{Systematic
construction of non-autonomous Hamiltonian equations of Painlev\'{e}-type. II.
Isomonodromic Lax representation}, Stud. Appl. Math. (2022) online.

\bibitem {Dolan}P. Dolan, A. Kladouchou, C. Card, \emph{On the significance of
Killing tensors}, Gen. Rel. Grav. \textbf{21} (1989) 427-437.

\bibitem {Fecko}M. Fecko, \emph{Differential geometry and Lie groups for
physicists}, Cambridge University Press, New York, 2006.

\bibitem {Allan}Fordy A., \emph{The H\'{e}non-Heiles system revisited,
}Physica D \textbf{52 }(1991) 204-210

\bibitem {Iwasaki}K. Iwasaki, H. Kimura, S. Shimomura, M. Yoshida, \emph{From
Gauss to Painlev\'{e}. A Modern Theory of Special Functions}, Vieweg \&\ Sohn
Verlagsgesellschaft mbH, Braunschweig, 1991.

\bibitem {kim}J-H. Kim, H-W. Lee, \emph{Canonical transformations and the
Hamilton-Jacobi theory in quantum mechanics}, Can. J. Phys. \textbf{77} (1999) 411--425

\bibitem {Lundell}A.T. Lundell, \emph{A short proof of the Frobenius theorem},
Proc. Amer. Math. Soc. \textbf{116} (1992), no. 4, 1131--1133.

\bibitem {mb}K. Marciniak, M. B\l {}aszak, \emph{Non-Homogeneous Hydrodynamic
Systems and Quasi-St\"{a}ckel Hamiltonians}, SIGMA \textbf{13} (2017), art.
077, 15 pages.
\end{thebibliography}
\end{document}